\newcommand{\mmdef}[1]{\emph{#1}}
\newcommand{\PP}{\mathcal{P}}
\newcommand{\II}[2]{[ #1 , #2 ]}
\DeclareMathOperator\MM{\mathcal{M}}
\DeclareMathOperator\RR{\mathbb{R}}
\DeclareMathOperator\DD{\mathsf{div}}
\DeclareMathOperator\TDD{\mathsf{TDD}}
\DeclareMathOperator\RD{\mathsf{RD}}
\title{Dushnik-Miller dimension of TD-Delaunay complexes\thanks{This
    research is partially supported by the ANR GATO, under contract
    ANR-16-CE40-0009.}}
\author[a]{Daniel Gonçalves}
\author[a]{Lucas Isenmann}
\affil[a]{{\small LIRMM, CNRS \& Université de Montpellier\\
     France\\
%    \href{mailto:daniel.goncalves@lirmm.fr}{daniel.goncalves@lirmm.fr}
}}
\date{}
\newenvironment{proof}{\par \noindent \textbf{Proof.} }{\hfill$\Box$\medskip}
\newtheorem{theorem}{Theorem}
\newtheorem{definition}[theorem]{Definition}
\newtheorem{proposition}[theorem]{Proposition}
\newtheorem{lemma}[theorem]{Lemma}
\newtheorem{conjecture}[theorem]{Conjecture}
\newtheorem{example}[theorem]{Example}
\begin{document}

\maketitle

\begin{abstract}
TD-Delaunay graphs, where TD stands for triangular distance, is a variation of the classical 
Delaunay triangulations obtained from a specific convex distance function~\cite{CD}.
In \cite{BGHI} the authors noticed that every triangulation is the TD-Delaunay graph of a set of points in $\RR^2$, and conversely every TD-Delaunay graph is planar.  
It seems natural to study the generalization of this property in higher dimensions.  
Such a generalization is obtained by defining an analogue of the triangular distance for $\RR^d$.  
It is easy to see that TD-Delaunay complexes of $\RR^{d-1}$ are of Dushnik-Miller dimension $d$.  
The converse holds for $d=2$ or $3$ and it was conjectured to hold for larger $d$~\cite{Mary} (See also~\cite{EFKU}).  
Here we disprove the conjecture already for $d = 4$.
\end{abstract} 

\section{Introduction}

The order dimension (also known as the Dushnik-Miller dimension) of a poset $P$ has been introduced by Dushnik and Miller \cite{DM}.
It is the minimum number of linear extensions of $P$ such that $P$ is the intersection of these extensions.
See \cite{Trotter} for a comprehensive study of this topic.
Schnyder \cite{Schnyder} studied the Dushnik-Miller dimension of the incidence posets of graphs.
%by defining the Dushnik-Miller dimension of a graph (or a simplicial complex) as the Dushnik-Miller dimension of its incidence poset.
Some classes of graphs can be characterized by their Dushnik-Miller dimension which is the Dushnik-Miller dimension of their incidence poset.
For example, path forests are the graphs of Dushnik-Miller dimension at most $2$.
Schnyder~\cite{Schnyder} obtained a celebrated combinatorial characterization of planar graphs: they are those of Dushnik-Miller dimension at most $3$.
The question of characterizing classes of graphs of larger dimension is open.
Nevertheless there are some partial results.
Bayer {\it et al.}~\cite{BPS98} and Ossona de Mendez~\cite{Ossona} showed that every simplicial complex of Dushnik-Miller dimension $d$ has a straight line embedding in $\RR^{d+1}$ which generalizes the result of Schnyder in a way.
The reciprocal is false by considering $K_n$ which has a straight line embedding in $\RR^3$ while it has Dushnik-Miller dimension $\log\log n$ \cite{HM}.
The class of Dushnik-Miller dimension at most $4$ graphs is rather rich.
Extremal questions in this class of graphs have been studied: Felsner and Trotter~\cite{FT} showed that these graphs can have a quadratic number of edges.
Furthermore, in order to solve a question about conflict free coloring \cite{ELRS},  Chen \textit{et al.}~\cite{CPST} showed that most of the graphs of Dushnik-Miller dimension $4$ only have independent sets of size at most $o(n)$.
This result also implies that there is no constant $k$ such that every graph of Dushnik-Miller dimension at most $4$ is $k$-colorable.
Therefore, graphs of Dushnik-Miller dimension at most $4$ seem difficult to characterize. 
Nevertheless, it was conjectured in~\cite{Mary} (See also~\cite{EFKU}) that the class of Dushnik-Miller dimension $d$ complexes is the class of TD-Delaunay complexes which we will define in the next paragraph.
The result holds for $d=2$ and $d=3$, but we disprove it already for $d=4$ in this paper.

We now define the class of TD-Delaunay complexes which finds its origins in spanners introduced by Chew and Drysdale \cite{CD}.
Given points in the plane, a plane spanner is a subgraph of the complete graph on these points which is planar when joining adjacent points with segments.
The stretch of a plane spanner is the maximum ratio of the distance in the graph between two vertices when using the Euclidean weight function and the Euclidean distance between these two points.
Given points in the plane, the question raised by Chew and Drysdale is to find a plane spanner which minimizes the stretch.
We define the stretch of a class of plane spanners as the maximum stretch of any of these graphs.
Chew \cite{ChewL1} found the first class of plane spanners.
It consists in the class of $L_1$-Delaunay graphs which is a variant of the Delaunay graphs where the $L_2$ norm is replaced by the $L_1$ norm.
Given a norm $N$ and points $P$ in the plane, we define their Delaunay graph according to this norm as follows.
Given a point $x$, we define its Voronoi cell as the points $y$ of the plane such that $x$ is among the nearest points of $P$ (according to the norm $N$) to $y$.
Two points are connected if and only if their Voronoi cells intersect.
This is equivalent to the fact that there exists a $N$-disk containing both points but no other in its interior.
%By taking the norm $L_1$, Chew showed that $L_1$-Delaunay graphs are plane spanners of stretch at most $\sqrt{10}$.
%Bonichon et al. showed that the stretch of $L_1$-Delaunay graphs is at most $\sqrt{4 + 2 \sqrt{2}} = 2.61$ and that this bound is tight \cite{BGHP}.
Chew~\cite{ChewL1} conjectured that the class of $L_2$-Delaunay graphs (classical Delaunay graphs) has a finite stretch.
This question initiated a series of papers about this topic which drops the upper bound from $5.08$ and $2.42$ to $1.998$~\cite{BK,DFS,Xia}.
%Dobkin et al. proved this conjecture and showed that the stretch of $L_2$-Delaunay if at most $5.08$ \cite{DFS}.
%Keil and Gutwin improved the upper bound to $2.42$ \cite{BK}.
%And finally in 2013, Xia improved it to the best current upper bound of $1.998$ \cite{Xia}.
%The current best lower bound is $1.5932$ by Xia and Zhang \cite{XZ}.
%An important subcase is to consider that the points are in convex position.
%In this case, the best upper bound is $1.88$ by Amani \textit{et al.} \cite{ABBDMS}.
A variant of this problem asks for minimizing the maximum degree.
Kanj \textit{et al.}~\cite{KPT} proved that there exists plane spanners of maximum degree 4 and of stretch at most $20$. 

Just a few years after introducing the plane spanner problem, Chew~\cite{ChewTD} found a second class of plane spanners.
It consists in TD-Delaunay graphs, obtained using the so-called triangular distance (which is not a distance but which is a convex distance function).
Given a compact convex shape $S$ and a point $c$ in the interior of $S$, we define the convex distance function, also called Minkowski distance function, between two points $p$ and $q$ as the minimal scaling factor $\lambda$ such that after rescaling $S$ by $\lambda$ and translating it in the way to center it on $p$, then it contains also $q$.
By taking $S$ the unit circle we get the Euclidean distance.
By taking $S$ the unit square we get the $L_{\infty}$ distance.
By taking $S$ an equilateral triangle we get what we call the triangular distance.
Chew~\cite{ChewTD} showed that their stretch is at most $2$ making the class of TD-Delaunay graphs the best plane spanners class until Xia~\cite{Xia} showed that the stretch of $L_2$-Delaunay graphs is strictly less than $2$.
This class is also used to obtain bounded degree plane spanners  $\cite{KPT}$.
TD-Delaunay graphs can be generalized to higher dimensions by taking the triangular distance in $\RR^d$ according to a regular $d$-simplex.
%\lucas{temps de calcul du meilleur plane spanner ?}

%The plan of the article is the following.
The second section is dedicated to the notion of Dushnik-Miller dimension applied to the inclusion poset of simplicial complexes.
In the third section we define TD-Delaunay complexes.
In the fourth section, we introduce the notion of multi-flows which will be useful to the main theorem.
The fifth section contains our main contribution in the form of a simplicial complex whose inclusion poset has Dushnik-Miller dimension $4$ but that is not a TD-Delaunay complex in $\RR^3$.
Finally, in the sixth section we prove that rectangular Delaunay complexes~\cite{CPST,FarXiv06} are TD-Delaunay complexes in $\RR^3$.

\section{Dushnik-Miller dimension of simplicial complexes}

Bonichon \textit{et al.} \cite{BGHI} showed the following property of TD-Delaunay graphs.
\begin{theorem}
    A graph $G$ is planar if and only if $G$ is a subgraph of a TD-Delaunay graph.
\end{theorem}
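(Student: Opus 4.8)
The plan is to prove the two implications of the equivalence separately. For the implication that a subgraph of a TD-Delaunay graph is planar, I would show that any Delaunay-type graph defined by a convex distance function in the plane is planar, and then use that a subgraph of a planar graph is planar. For the converse, it is enough to realize every planar triangulation as a TD-Delaunay graph of some point set, since every planar graph is a (not necessarily spanning) subgraph of a planar triangulation on a superset of its vertices: one adds edges inside a fixed plane embedding, together with an enclosing triangle if necessary, until every face is a triangle.

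Consider first the direction showing that $\TDD(P)$ is planar. Recall that $uv$ is an edge of the TD-Delaunay graph $\TDD(P)$ of a finite $P \subseteq \RR^2$ precisely when some homothet $H$ of the defining triangle $T$ (with positive ratio, so $H = p + \lambda T$) contains $u$ and $v$ but no point of $P$ in its interior. Assuming $P$ in general position — a routine perturbation reduces to this case — I would prove that two edges $uv$, $u'v'$ of $\TDD(P)$ cannot cross: if their straight segments met at a point $x$, then convexity of the empty witnesses $H \ni u,v$ and $H' \ni u',v'$ forces $x \in H \cap H'$, and an elementary case analysis on overlapping homothets of a triangle produces one of $u,v,u',v'$ in the interior of $H$ or of $H'$, contradicting emptiness. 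Hence the straight-line drawing of $\TDD(P)$ at the points of $P$ is plane, so $\TDD(P)$, and every subgraph of it, is planar. Alternatively one invokes the duality with the convex-distance Voronoi diagram, whose cells are star-shaped and tile the plane.

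For the converse direction, let $G'$ be a planar triangulation with $G \subseteq G'$, with outer triangle $a_1a_2a_3$, and fix a Schnyder wood of $G'$: an orientation and $3$-colouring of the inner edges in which every inner vertex has exactly one outgoing edge of each colour, yielding three trees rooted at $a_1,a_2,a_3$. For an inner vertex $v$ and $i \in \{1,2,3\}$ let $r_i(v)$ be the number of faces of $G'$ in the region $R_i(v)$ bounded at $v$ by its two outgoing paths of colours $i{-}1$ and $i{+}1$ and a portion of the outer triangle; the three numbers $r_i(v)$ have constant sum. Place $a_1,a_2,a_3$ at the corners of a large homothet of $T$ and each inner $v$ at the point of barycentric coordinates $(r_1(v),r_2(v),r_3(v))$, obtaining a point set $P$; I claim $\TDD(P)=G'$. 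The key geometric fact is that, as $\lambda \to \infty$, the homothet $v + \lambda T$ fills one of the three $120^\circ$ cones at $v$, and these cones correspond exactly to the three Schnyder regions. Using this together with the region-counting inequalities of the Schnyder wood, one shows (i) that each edge $uv$ of $G'$ admits an empty homothet of $T$ through $u$ and $v$, obtained by shrinking the relevant region until it first meets one of the two points, and (ii) that a homothet of $T$ through two non-adjacent vertices necessarily contains a third vertex in its interior. Therefore $G \subseteq G' = \TDD(P)$.

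The main obstacle is part (ii) above — together with the tight bookkeeping it shares with part (i): one has to match the combinatorial face-counts $r_i(v)$ of the Schnyder wood with the metric geometry of triangular-distance disks precisely enough that no unwanted pair becomes Delaunay-adjacent while every edge of $G'$ is witnessed. Additional care is needed at the outer face, where the relevant homothets of $T$ degenerate to half-planes or cones, and for vertices with coinciding coordinates; a mild perturbation of $P$ (or imposing a strictly convex constraint in place of the linear one) removes these degeneracies. The first implication is comparatively soft, its only delicate point being the general-position reduction.
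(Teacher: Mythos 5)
The paper does not prove this statement at all: it is imported verbatim from Bonichon, Gavoille, Hanusse and Ilcinkas~\cite{BGHI}, so there is no internal proof to compare against. Your outline is in fact the standard route of that reference and of Schnyder's theory: planarity of convex-distance Delaunay graphs for the forward direction, and a Schnyder-wood face-count (barycentric) embedding realizing any triangulation as a TD-Delaunay graph for the converse. So the strategy is the right one, and the reduction from arbitrary planar graphs to triangulations is handled correctly.

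That said, as written this is a proof plan rather than a proof, and the gap is exactly where you locate it. In the forward direction, the ``elementary case analysis on overlapping homothets of a triangle'' is asserted, not done; the clean way to discharge it is to note that positive homothets of a triangle form a family of pseudo-disks (indeed, in the paper's coordinates $S_c \cap S_{c'} = S_{\min(c,c')}$ is again a homothet), or to argue via the star-shapedness of the convex-distance Voronoi cells, and in either case the general-position hypothesis (no two points sharing a coordinate, i.e.\ no two points on a line parallel to a side of $T$) is genuinely needed, not just a convenience. In the converse direction, both (i) the existence of an empty homothet through each edge and (ii) the exclusion of non-edges are stated as things ``one shows''; (ii) is the heart of the theorem and is precisely the content of the region-counting inequalities $r_i(u) < r_i(v)$ for $v$ in the interior of region $R_i(u)$, which you would have to prove and then translate into the statement that any homothet containing two non-adjacent vertices contains a third point in its interior. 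Until that bookkeeping is carried out (or explicitly delegated to~\cite{BGHI} or to Schnyder's original lemmas), the converse direction is not established.
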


As a graph is planar if and only if $G$ is of Dushnik-Miller dimension at most $3$, there is maybe a link between the notions of Dushnik-Miller dimension and TD-Delaunay complexes.
In \cite{Mary} and \cite{EFKU} it was independently conjectured that Dushnik-Miller dimension at most $d$ complexes are exactly the subcomplexes of TD-Delaunay complexes of $\RR^{d-1}$.

We now need to define formally the notions used.
First of all abstract simplicial complexes generalize the notion of graphs.
%This structure captures the fact that when four vertices are connected all together by a face (a tetrahedron), any three vertices of them also form a face (a triangle).
An \emph{abstract simplicial complex} $\Delta$ with vertex set $V$ is a set of subsets of $V$ which is closed by inclusion (\textit{i.e.} $\forall Y \in \Delta$, $X \subseteq Y \Rightarrow X \in \Delta$). 
An element of $\Delta$ is called a \emph{face}.
A maximal element of $\Delta$ according to the inclusion order is called a \emph{facet}.

\begin{figure}[!h]
\centering
\includegraphics[scale=1]{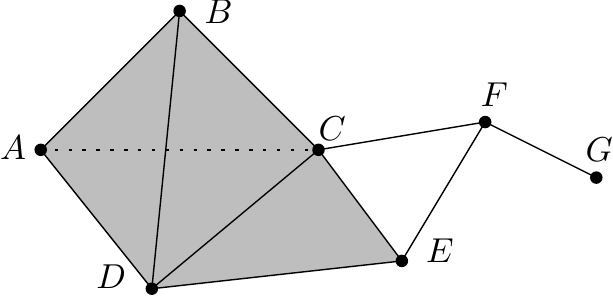}
\caption{An abstract simplicial complex whose facets are $\{A,B,C,D\},\{C,D,E\}, \{C,F\}, \{E,F\}, \{F,G\} $}
\label{fig:abs-simplicial-complex}
\end{figure}

\subsection{Dushnik-Miller dimension}

The notion of \emph{Dushnik-Miller} dimension of a poset has been introduced by Dushnik and Miller \cite{DM}.
It is also known as the \emph{order dimension} of a poset.
\begin{definition}
  The \emph{Dushnik-Miller dimension} of a poset $(\le,V)$ is the minimum number $d$ such that $(\le,V)$ is the intersection of $d$ linear extensions of $(\le,V)$. This means that there exists $d$ extensions $(\le_1,V), \ldots , (\le_d,V)$ of $(\le,V)$ such that for every $x,y \in V$, $x \le y$ if and only if $x \le_i y$ for every $i \in \II1d$.
     In particular if $x$ and $y$ are incomparable with respect to $\le$, then there exists $i$ and $j$ such that $x \le_i y$ and $y \le_j x$.
\end{definition}

\begin{figure}[!h]
	\centering
	\includegraphics[scale=1]{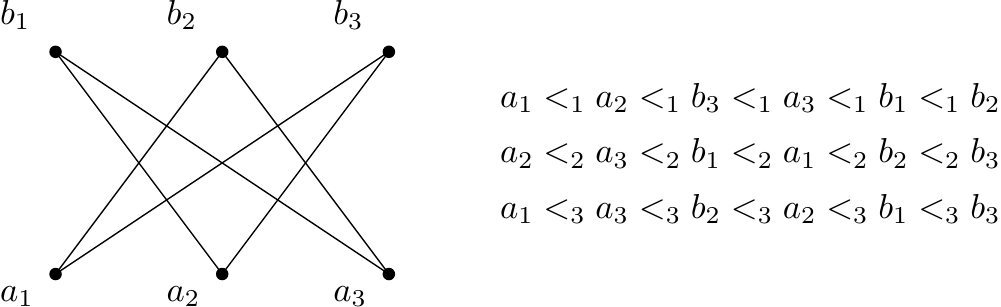}
	\caption{A poset of Dushnik-Miller dimension $3$ defined by its Hasse diagram and $3$ linear extensions $\le_1, \le_2$ and $\le_3$.}
\end{figure}

The notion of Dushnik-Miller dimension can be applied to an abstract simplicial complex as follows.

\begin{definition}
	Let $\Delta$ be an abstract simplicial complex.
    The inclusion poset of $\Delta$ is the poset $(\subset, \Delta)$.
	The Dushnik-Miller dimension of $\Delta$ is the Dushnik-Miller dimension of the inclusion poset of $\Delta$.
\end{definition}

\begin{figure}[!h]
\centering
\includegraphics{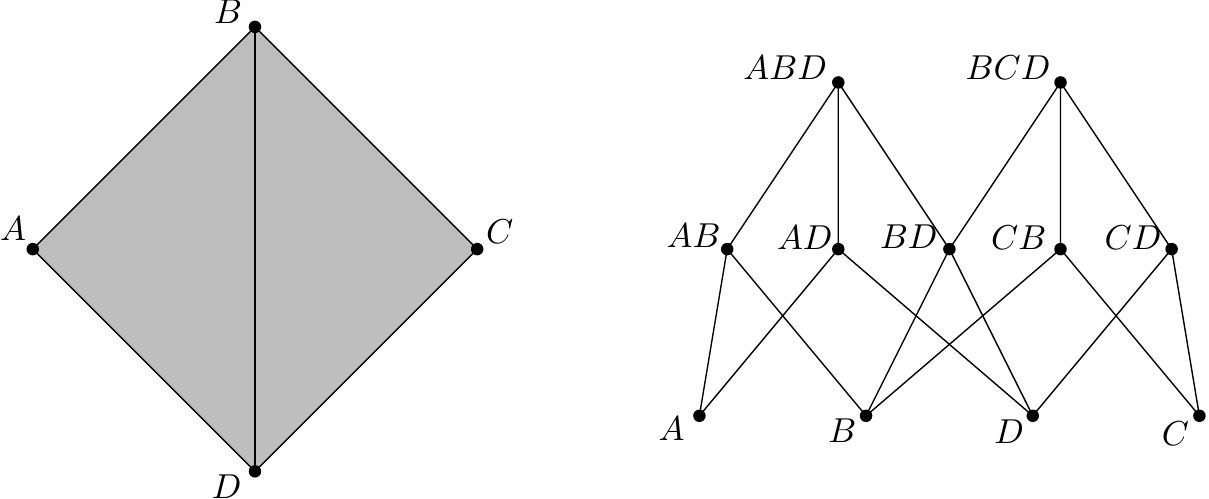}
\caption{An abstract simplicial complex with facets $\{A,B,D\}$ and $\{B,C,D\}$, and its inclusion poset.}
\label{fig:DM-dim}
\end{figure}

We denote $\dim_{DM}(\Delta)$ the Dushnik-Miller dimension of a simplicial complex $\Delta$.
Low dimensions are well known.
We have $\dim_{DM}(\Delta)=1$ if and only if $\Delta$ is a vertex.
We have $\dim_{DM}(\Delta)\le 2$ if and only if $\Delta$ is a union of paths.
There are complexes with arbitrarily high Dushnik-Miller dimension: for any integer $n$, $\dim_{DM}(K_n) = O( \log \log n)$ where $K_n$ denotes the complete graph on $n$ vertices.
The following theorem shows that the topological notion of planarity can be understood as a combinatorial property thanks to the Dushnik-Miller dimension.

\begin{theorem}[Schnyder~\cite{Schnyder}]
	\label{theorem:S89}
	A graph $G$ is planar if and only if $\dim_{DM}(G) \le 3$.
\end{theorem}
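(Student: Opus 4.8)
The plan is to prove the two implications separately, with the theory of \emph{Schnyder woods} (equivalently, \emph{realizers} of the incidence poset) serving as the bridge between the geometry of planarity and the combinatorics of order dimension. The relevant poset is the inclusion poset of $G$ seen as a $1$-dimensional simplicial complex, restricted to its vertices and edges; discarding the bottom element $\emptyset$ does not change the dimension. A preliminary remark is that $\dim_{DM}$ is monotone under taking subcomplexes: restricting a family of linear extensions of $\Delta$ to the faces of a subcomplex $\Delta'$ yields linear extensions realizing $\Delta'$, since incidence between a vertex and an edge is inherited. Hence for the ``only if'' direction it suffices to treat the case where $G$ is a plane triangulation, as every planar graph is a subgraph of some plane triangulation and adding edges cannot decrease the dimension. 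For the ``if'' direction we may assume $\dim_{DM}(G)=3$, since the case $\le 2$ gives a union of paths, which is trivially planar.

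\textbf{Planarity implies $\dim_{DM}\le 3$.} Fix a plane triangulation $G$ with outer face $a_1a_2a_3$. The combinatorial input is the existence of a Schnyder wood on $G$ (itself proved by a contraction/induction argument): an orientation and $3$-colouring of the interior edges in which every interior vertex has out-degree exactly one in each colour, and for each $i$ the colour-$i$ edges form a tree $T_i$ rooted at $a_i$ spanning the interior vertices. From the wood, each vertex $v$ carries three monochromatic paths $P_1(v),P_2(v),P_3(v)$ to the three roots, which cut $G$ into three regions $R_1(v),R_2(v),R_3(v)$. I would then define three linear orders $\le_1,\le_2,\le_3$ on vertices and edges: order the vertices in $\le_i$ by a tie-broken refinement of $|R_i(\cdot)|$, and insert each edge $uv$ immediately above its $\le_i$-larger endpoint. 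The verifications needed are that each $\le_i$ is a linear extension of incidence, and that for a vertex $w$ and a non-incident edge $e$ there is always an index placing $e$ above $w$ and another placing $e$ below $w$; the monotonicity inequalities between the regions coming from the Schnyder wood are exactly what makes this possible, and this is the technical heart of the direction.

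\textbf{$\dim_{DM}\le 3$ implies planarity.} Suppose $\{\le_1,\le_2,\le_3\}$ realizes the incidence poset of $G$. The first step is to normalize the realizer so that the three $\le_i$-minimal vertices are distinct, call them $a_1,a_2,a_3$, which can always be arranged after small local modifications; then one extracts a Schnyder-wood-type labelling of the edge-ends of $G$, each edge $uv$ receiving a colour and orientation determined by how $u$ and $v$ compare in the three orders, and one checks the out-degree and acyclicity conditions directly from the realizer axioms. From this structure one maps each vertex $v$ to a point $\phi(v)$ in a large triangle, using barycentric-type coordinates built from the three region sizes, and draws every edge as a straight segment. The crucial claim is that the drawing has no crossing: if two vertex-disjoint edges $e$ and $f$ crossed, the coordinate geometry would force some vertex and some edge to be separated in all three coordinate directions, which unwinds to a contradiction with the fact that the three orders intersect to incidence, making an element both above and below an incomparable one. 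An alternative route is the contrapositive via Kuratowski: show $\dim_{DM}(K_5)=\dim_{DM}(K_{3,3})=4$ by counting critical pairs, and show that subdividing an edge keeps the dimension at least $\min(3,\dim_{DM})$, so a non-planar graph has dimension at least $4$.

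\textbf{Main obstacle.} The forward implication is essentially bookkeeping once a Schnyder wood is in hand, and the reductions (monotonicity, Kuratowski) are routine. The genuine difficulty is the reverse implication: an abstract $3$-realizer carries no a priori planarity, so the real work is to show that the combinatorial ``region'' decomposition it induces is globally consistent, i.e.\ that the monochromatic paths do not cross. That is precisely the content of Schnyder's region lemmas, and the step I would expect to occupy most of the proof.
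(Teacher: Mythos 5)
The paper does not prove this statement: it is imported verbatim from Schnyder~\cite{Schnyder}, so there is no in-paper argument to compare yours against. Your outline is, in substance, Schnyder's original route — Schnyder woods and their three regions for ``planar $\Rightarrow \dim_{DM}(G)\le 3$'', and a realizer-to-barycentric-embedding argument for the converse — and the preliminary reductions you make (monotonicity of the dimension under subgraphs, reduction to plane triangulations, dispatching $\dim_{DM}\le 2$ as unions of paths) are all correct. As a plan it is acceptable, but be honest with yourself about what it omits: the existence of Schnyder woods on triangulations, the region-inclusion lemmas, the fact that for every vertex $w$ and every edge $uv$ with $w\notin\{u,v\}$ some region $R_i(w)$ contains both $u$ and $v$, and the non-crossing argument for the straight-line drawing are not ``bookkeeping''; they are the entire content of the theorem, and none of them is supplied here.

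One concrete error: the ``alternative route'' via Kuratowski for the converse is not routine and, as written, does not work. The Dushnik--Miller dimension of the incidence poset is monotone under taking subgraphs but \emph{not} under edge subdivision (subdividing edges can strictly decrease it), so computing $\dim_{DM}(K_5)=\dim_{DM}(K_{3,3})=4$ says nothing about the dimension of their subdivisions, and a non-planar graph is only guaranteed to contain a \emph{subdivision} of $K_5$ or $K_{3,3}$ as a subgraph. To make that route work you would need to show that every subdivision of $K_5$ and of $K_{3,3}$ has dimension at least $4$, and the only known proof of that fact goes through Schnyder's theorem itself. Stick to the embedding argument for the converse direction.
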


A generalization of the notion of planarity for simplicial complexes is the notion of straight line embedding.
As in the planar case, we do not want that two disjoint faces intersect.
We recall that a simplex of $\RR^d$ is the convex hull of a set of affinely independent points. We denote $\mathsf{Conv}(X)$, the convex hull of a set of points $X$.

\begin{definition}	
	Let $\Delta$ be a simplicial complex with vertex set $V$.
	A \emph{straight line embedding} of $\Delta$ in $\RR^d$ is a mapping $f : V \to \RR^d$ such that
	
	\begin{itemize}
	 \item $\forall X  \in \Delta$, $f(X)$ is a set of affinely independent points of $\RR^d$,
	 \item $\forall X,Y \in \Delta$, $\mathsf{Conv}(f(X)) \cap \mathsf{Conv}(f(Y)) = \mathsf{Conv}(f(X \cap Y))$.
	\end{itemize}
\end{definition}

The following theorem shows that the Dushnik-Miller dimension in higher dimensions also captures some geometrical properties.

\begin{theorem}[Bayer {\it et al.}~\cite{BPS98}, and Ossona de Mendez~\cite{Ossona}]% based on Scarf~\cite{Scraf} Thm 2.8.4
	\label{theorem:SLE}
	Any simplicial complex $\Delta$ such that\\ $\dim_{DM}(\Delta) \le d+1$ has a straight line embedding in $\RR^d$.
\end{theorem}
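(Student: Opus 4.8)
The plan is to build the embedding explicitly from a realizer of the inclusion poset, generalising Schnyder's barycentric coordinates for the planar case $d=2$. First, $\dim_{DM}(\Delta)\le d+1$ already forces $\Delta$ to have geometric dimension at most $d$: a face with $k$ vertices $v_1,\dots,v_k$ induces inside $(\Delta,\subseteq)$ a copy of the standard example (the singletons $\{v_i\}$ as minimal elements, the subfaces $F\setminus\{v_i\}$ as maximal ones), whose order dimension is exactly $k$, so $k\le d+1$ and each $\mathsf{Conv}(f(F))$ will be a genuine $(|F|-1)$-simplex once the $f(v_i)$ are in general position. Second, by a standard fact about geometric realizations it suffices to produce a vertex map $f$ such that each $f(F)$ is affinely independent and $\mathsf{Conv}(f(F))\cap\mathsf{Conv}(f(G))=\emptyset$ for every pair of \emph{disjoint} faces $F,G$; and when $F,G$ are disjoint with $F\cup G\in\Delta$ this is automatic, since $\mathsf{Conv}(f(F))$ and $\mathsf{Conv}(f(G))$ are then two faces of the honest simplex $\mathsf{Conv}(f(F\cup G))$ on disjoint vertex sets. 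So the core of the matter is disjoint faces $F,G$ with $F\cup G\notin\Delta$.

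Next I fix linear extensions $\le_1,\dots,\le_{d+1}$ of $(\Delta,\subseteq)$ with $\bigcap_i\le_i\,=\,\subseteq$, writing $\le_i$ also for the induced linear order on the vertex set $V$. To each vertex $v$ and each nonempty face $F$ I attach a colour $c(v,F)\in\{1,\dots,d+1\}$: if $v\notin F$ then $\{v\}$ and $F$ are incomparable in the poset, hence separated by the realizer, and I set $c(v,F)=\min\{\,i:\{v\}\le_i F\,\}$; if $v\in F$ I use a dual rule depending on the orders in which $v$ is extremal inside $F$, chosen so that $v\mapsto c(v,F)$ is injective on $F$ (possible since $|F|\le d+1$). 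Letting $x_i(v)$ be the number of nonempty faces $F$ with $c(v,F)=i$, I set $f(v)=(x_1(v),\dots,x_{d+1}(v))$. Because $x_1(v)+\dots+x_{d+1}(v)$ equals the total number of nonempty faces of $\Delta$, independently of $v$, the image of $f$ lies in an affine hyperplane, a copy of $\RR^{d}$; this normalisation is exactly what brings the dimension down from $d+1$ to $d$, playing the role of the identity $v_1+v_2+v_3=2n-5$ in Schnyder's argument.

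It then remains to prove the separation lemma, and this is the hard part. Given disjoint faces $F,G$ with $F\cup G\notin\Delta$, the faces $F$ and $G$ have no common upper bound in $(\Delta,\subseteq)$, and the aim is to convert this into an index $i$ for which the colour classes around $F$ and around $G$ are stacked along $\le_i$ so that $x_i(u)<x_i(w)$ for all $u\in F$ and $w\in G$ (or the reverse); then any point of $\mathsf{Conv}(f(F))\cap\mathsf{Conv}(f(G))$ would have its $i$-th coordinate forced into two disjoint closed intervals at once. Affine independence of each $f(F)$ follows from the same monotonicity, after a generic perturbation inside the hyperplane to destroy accidental collinearities, which is harmless since all the separations above are strict. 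The genuinely delicate points — extracting the index $i$ from the lack of a common upper bound, and choosing the colour rule on faces containing $v$ so that the coordinate sum stays constant — are precisely where Schnyder's two-dimensional combinatorics has to be generalised, and they are the technical heart of the arguments of Bayer {\it et al.} and of Ossona de Mendez.
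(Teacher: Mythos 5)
The paper does not actually prove this theorem --- it is quoted from Bayer \emph{et al.} and Ossona de Mendez --- so there is no internal proof to compare against; judged on its own, your outline correctly identifies the known strategy (barycentric-type counting coordinates built from a realizer, normalised onto a hyperplane of $\RR^{d+1}$), and your two preliminary reductions are sound: the standard-example argument bounding face sizes by $d+1$, and the reduction of the embedding condition to disjointness of convex hulls of disjoint faces. But the proposal stops exactly where the proof begins. The colour rule for $v\in F$ is never actually defined (``a dual rule\dots chosen so that\dots''), yet both the constant coordinate sum and the affine independence of each $f(F)$ depend on it; and the separation lemma, which you yourself call the hard part, is not argued at all but delegated back to the cited references. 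An argument whose only nontrivial step is ``this is the technical heart of \cite{BPS98,Ossona}'' is an announcement of a proof, not a proof.

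Moreover, the specific form of separation you propose to establish --- a single index $i$ with $x_i(u)<x_i(w)$ for all $u\in F$ and $w\in G$ --- is stronger than what these arguments prove and is not in general achievable: already in the plane $x_1+x_2+x_3=\mathrm{const}$ two disjoint parallel segments (e.g.\ from $(1,0,0)$ to $(0,\tfrac12,\tfrac12)$ and from $(0.9,0.1,0)$ to $(0,0.55,0.45)$) overlap in every single coordinate. Schnyder's planar proof separates a \emph{vertex} from a face not containing it by one coordinate, and then rules out crossings of two disjoint faces by a separate pigeonhole argument on the dominating indices attached to their vertices; Ossona de Mendez's higher-dimensional argument has the same two-stage structure. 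So even the target statement of your missing lemma would need to be reformulated before the approach could be completed.
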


For $d=2$, this theorem states that if a simplicial complex has dimension at most $3$ then it is planar. 
Brightwell and Trotter \cite{BT} proved that the converse also holds (for $d=2$)\footnote{Note that in a straight line embedding in $\RR^2$ every triangle is finite, and it is thus impossible to embed a spherical complex like a octahedron or any polyhedron with triangular faces.}.
For higher $d$, the converse is false: take for example $K_n$ the complete graph which has a straight line embedding in $\RR^3$ (and therefore in $\RR^d$ for $d \ge 3$) and which has Dushnik-Miller dimension $O(\log \log n)$ \cite{HM}.

\subsection{Representations}

Representations have been introduced by Schnyder~\cite{Schnyder} in order to prove Theorem~\ref{theorem:S89}.
%% Ossona de Mendez in \cite{Ossona} in order to prove Theorem~\ref{theorem:SLE}.
It is a tool for dealing with Dushnik-Miller dimension.
Here, only vertices will be ordered while in the Dushnik-Miller dimension every face of the complex must be ordered. 

\begin{definition}
	Given a linear order $\le$ on a set $V$, an element $x \in V$, and a set $F \subseteq V$, we say that $x$ {\it dominates} $F$ in $\le$, and we denote it $F\le x$, if $f \le x$ for every $f \in F$. 
	A {\it $d$-representation} $R$ on a set $V$ is a set of $d$ linear orders $\le_1 , \ldots , \le_d$ on $V$.	
	Given a $d$-representation $R$, an element $x \in V$, and a set $F \subseteq V$, we say that $x$ {\it dominates} $F$ (in $R$) if $x$ dominates $F$ in some order $\le_i\in R$.
	We define $\Sigma(R)$ as the set of subsets $F$ of $V$ such that every $v \in V$ dominates $F$.
\end{definition}

Note that $\emptyset\in \Sigma(R)$ for any representation $R$. An element $x\in V$ is a {\it vertex} of $\Sigma(R)$ if $\{x\}\subseteq V$. Note that sometimes an element $x\in V$ is not a vertex of $\Sigma(R)$. Actually, the definition of $d$-representation provided here is slightly different from the one in~\cite{Schnyder} and~\cite{Ossona}. There, the authors ask for the intersection of the $d$ orders to be an antichain. With this property, every element of $V$ is a vertex of $\Sigma(R)$. Note that simply removing the elements of $V$ that are not vertices of $\Sigma(R)$ yields a representation in the sense of~\cite{Ossona,Schnyder}.

\begin{proposition}
	For any $d$-representation $R = (\le_1, \ldots , \le_d)$ on a set $V$,
	$\Sigma(R)$ is an abstract simplicial complex.
\end{proposition}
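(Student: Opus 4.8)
The plan is to prove the only nontrivial requirement, namely that $\Sigma(R)$ is closed under taking subsets (the empty set automatically belongs to $\Sigma(R)$ since every $v \in V$ vacuously dominates $\emptyset$, so there is at least one face and nothing else needs checking). The key observation is a monotonicity property of the domination relation: if $x$ dominates a set $F$ in some order $\le_i$ and $F' \subseteq F$, then $x$ dominates $F'$ in $\le_i$ as well, simply because $f \le_i x$ for all $f \in F$ implies $f \le_i x$ for all $f \in F'$.

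First I would take an arbitrary $Y \in \Sigma(R)$ and an arbitrary $X \subseteq Y$, and aim to show $X \in \Sigma(R)$; by definition this means showing that every $v \in V$ dominates $X$ in $R$. So I fix an arbitrary $v \in V$. Since $Y \in \Sigma(R)$, the element $v$ dominates $Y$ in $R$, i.e.\ there is an index $i \in \II 1d$ (possibly depending on $v$) such that $y \le_i v$ for every $y \in Y$. Because $X \subseteq Y$, this in particular gives $x \le_i v$ for every $x \in X$, so $v$ dominates $X$ in $\le_i$, hence $v$ dominates $X$ in $R$.

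Since $v$ was arbitrary, every element of $V$ dominates $X$, and therefore $X \in \Sigma(R)$. This establishes that $\Sigma(R)$ is closed under inclusion, so it is an abstract simplicial complex with vertex set contained in $V$. I do not expect any real obstacle here: the statement is essentially immediate from unwinding the definitions, the only thing to be careful about is not to confuse ``$v$ dominates $Y$ in $R$'' (domination in \emph{some} order) with domination in a \emph{fixed} order, and to note that the same witnessing order $\le_i$ that works for $Y$ also works for every subset $X$ of $Y$.
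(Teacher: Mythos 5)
Your proof is correct and follows essentially the same argument as the paper: fix a subset $X$ of a face $F\in\Sigma(R)$ and an arbitrary $v\in V$, take the order $\le_i$ in which $v$ dominates $F$, and observe that the same order witnesses domination of $X$. Nothing is missing.
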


%The proof is easy but important to understand as the methods that will be used in the next will be the same.

\begin{proof}
	For any $F \in \Sigma(R)$, let $X$ be any subset of $F$, and let $v$ be any element of $V$.
	Since $F \in \Sigma(R)$, there exists $\le_i \in R$ such that $y \le_i v$ for every $y \in F$.	Particularly, $x \le_i v$ for every $x \in X$. Thus $X \in \Sigma(R)$ and we have proven that $\Sigma(R)$ is an abstract simplicial complex.
\end{proof}

An example is the following $3$-representation on $\{1,2,3,4,5\}$ where each line corresponds to a linear order whose elements appear in increasing order from left to right:

$$\begin{array}{c|cccccc}
   \le_1 & 1 & 2 & 5 & 4 & 3 \\
   \le_2 & 3 & 2 & 1 & 4 & 5 \\
   \le_3 & 5 & 4 & 3 & 2 & 1 
  \end{array}$$
  
The corresponding complex $\Sigma(R)$ is given by the facets $\{1,2\},
\{2,3,4\}$ and $\{2,4,5\}$.  For example $\{1,2,3\}$ is not in
$\Sigma(R)$ as $2$ does not dominate $\{1,2,3\}$ in any order.
 The following theorem shows that representations and Dushnik-Miller dimension are equivalent notions.
 
\begin{theorem}[Ossona de Mendez~\cite{Ossona}]
	\label{theorem:DM_representation}
	Let $\Delta$ be a simplicial complex with vertex set $V$.
	Then $\dim_{DM}(\Delta) \le d$ if and only if there exists a $d$-representation $R$ on $V$ such that $\Delta$ is included in $\Sigma(R)$.
\end{theorem}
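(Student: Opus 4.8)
The plan is to exhibit, in both directions, an explicit translation between a family of $d$ linear extensions of the inclusion poset $(\subseteq,\Delta)$ and a $d$-representation on the vertex set $V$, so that faces of $\Delta$ correspond to elements of $\Sigma(R)$.

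For the ``only if'' direction, assume $\dim_{DM}(\Delta)\le d$ and fix linear extensions $\preceq_1,\dots,\preceq_d$ of $(\subseteq,\Delta)$ whose intersection is $(\subseteq,\Delta)$. I would define, for each $i$, a linear order $\le_i$ on $V$ by restricting $\preceq_i$ to singletons, i.e.\ $u\le_i v$ iff $\{u\}\preceq_i\{v\}$ (the elements of $V$ contained in no face of $\Delta$ are handled separately, see below). Set $R=(\le_1,\dots,\le_d)$. To see that $\Delta\subseteq\Sigma(R)$, take a nonempty face $F\in\Delta$ and an arbitrary $v\in V$; it suffices to produce an index $i$ for which $v$ dominates $F$ in $\le_i$. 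Let $F'=F\setminus\{v\}$, which is again a face. If $F'=\emptyset$ then $F=\{v\}$ and $v$ dominates $F$ in every order. Otherwise $F'$ and $\{v\}$ are incomparable in $(\subseteq,\Delta)$ (neither contains the other), so some $\preceq_i$ orders them as $F'\prec_i\{v\}$ (incomparable elements are ordered both ways among the $\preceq_i$, since their intersection is $(\subseteq,\Delta)$). For each $f\in F'$ we have $\{f\}\subseteq F'$, hence $\{f\}\preceq_i F'\prec_i\{v\}$, i.e.\ $f<_i v$; together with $v\le_i v$ this shows that $v$ dominates $F$ in $\le_i$. As $v$ was arbitrary, $F\in\Sigma(R)$.

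For the ``if'' direction, assume $R=(\le_1,\dots,\le_d)$ is a $d$-representation on $V$ with $\Delta\subseteq\Sigma(R)$, and define for each $i$ a relation $\preceq_i$ on $\Delta$ by: for distinct faces $F,G$, set $F\prec_i G$ iff the $\le_i$-largest element of the (finite, nonempty) symmetric difference $F\mathbin{\triangle}G$ lies in $G$. A routine verification shows that each $\preceq_i$ is a linear order extending the inclusion order (if $F\subsetneq G$ then $F\mathbin{\triangle}G=G\setminus F$, whose $\le_i$-maximum is in $G$). It remains to check that $\bigcap_i\preceq_i$ has no relations beyond inclusion. Let $F,G\in\Delta$ be incomparable and choose $v\in F\setminus G$. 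Since $G\in\Delta\subseteq\Sigma(R)$, the vertex $v$ dominates $G$ in some $\le_i$, so $g<_i v$ for every $g\in G$ (strictly, because $v\notin G$); in particular $v$ is $\le_i$-above every element of $G\setminus F$, which forces the $\le_i$-maximum of $F\mathbin{\triangle}G$ to lie in $F\setminus G$, hence $G\prec_i F$. Symmetrically, picking $w\in G\setminus F$ yields an index $j$ with $F\prec_j G$. Thus $F$ and $G$ are incomparable in $\bigcap_i\preceq_i$, so this intersection equals $(\subseteq,\Delta)$ and $\dim_{DM}(\Delta)\le d$.

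The conceptual crux, and the step I expect to be the main obstacle, is the ``only if'' direction: realizing that it suffices to work with singletons, the key point being that $F\setminus\{v\}$ is a face incomparable to $\{v\}$, which forces some linear extension to place $\{v\}$ above it and therefore above every vertex of $F$. In the ``if'' direction the delicate choice is to apply the defining property of $\Sigma(R)$ to the face $G$ rather than $F$, paired with a witness $v\in F\setminus G$. Finally, the elements of $V$ lying in no face of $\Delta$ matter only through the requirement that they dominate every face; placing all of them above the remaining vertices in one fixed order $\le_1$ makes this automatic and disturbs none of the arguments above.
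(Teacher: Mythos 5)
Your proof is correct. Note that the paper itself gives no proof of this statement: it is quoted from Ossona de Mendez with a citation, so there is no in-paper argument to compare against. Your two constructions are sound and self-contained. In the ``only if'' direction, the key observation that $F\setminus\{v\}$ and $\{v\}$ are incomparable faces, so some linear extension places $\{v\}$ above $F\setminus\{v\}$ and hence above every singleton it contains, is exactly the right lever; the separate treatment of elements of $V$ that are not vertices of $\Delta$ (placing them on top of one order, arbitrarily in the others) is needed under the paper's conventions, since its definition of representation does not impose the antichain condition and $V$ may contain non-vertices. In the ``if'' direction, ordering faces by the location of the $\le_i$-maximum of the symmetric difference is the standard device; your ``routine verification'' of transitivity does go through (it amounts to comparing characteristic vectors with the $\le_i$-largest coordinate most significant), and the choice to apply domination of the face $G$ by a witness $v\in F\setminus G$ is precisely what forces $\max_{\le_i}(F\mathbin{\triangle}G)$ into $F\setminus G$. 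The only cosmetic gap is that you do not say where the non-vertex elements sit in the orders $\le_2,\dots,\le_d$, but any placement works since faces contain only vertices; this does not affect correctness.
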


For the following proofs, $R$ will be a $d$ representation $(\le_1, \ldots , \le_d)$ and $R'$ will be a $d$-representation $(\le_1', \ldots , \le_d')$.
The following lemmas are technical and will be useful for the proof of our main result, Theorem~\ref{thm:delaunay_ineq_syst}.

\begin{lemma}
	\label{lemma:repre_inversion}
	Let $R = (\le_1, \ldots , \le_d)$ be a $d$-representation on $V$.
	Let $x$ and $y$ be two different vertices of $\Sigma(R)$ (i.e. $\{x\}$ and $\{y\} \in \Sigma(R)$) such that $\{x,y\} \not\in \Sigma(R)$ and such that $x$ and $y$ are consecutive in the order $\le_i$.
	The representation obtained after the permutation of $x$ and $y$ in the order $i$, denoted $R'=(\le'_1, \ldots, \le'_d)$, is such that $\Sigma(R') = \Sigma(R)$. %and the orders between adjacent vertices have not changed.
\end{lemma}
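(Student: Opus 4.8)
The key observation is that swapping $x$ and $y$ in $\le_i$ can only affect, for any set $F$, whether or not a vertex $v$ dominates $F$ *in the order $\le_i$* (the other orders are untouched), and even then only when $\{x,y\}\cap F\neq\emptyset$. So the plan is to show that for every $F$ and every $v$, the condition ``$v$ dominates $F$ in $R$'' and ``$v$ dominates $F$ in $R'$'' coincide; this gives $\Sigma(R')=\Sigma(R)$ immediately. Since the roles of $R$ and $R'$ are symmetric (swapping $x,y$ again in $\le'_i$ returns $R$), it suffices to prove one inclusion, say $\Sigma(R)\subseteq\Sigma(R')$. So take $F\in\Sigma(R)$ and an arbitrary $v\in V$; I must produce an order of $R'$ in which $v$ dominates $F$.

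First, the easy cases. If $v$ dominates $F$ in some order $\le_j$ with $j\neq i$, then $\le'_j=\le_j$ and we are done. So assume $v$ dominates $F$ only via $\le_i$, i.e. $F\le_i v$. If $F$ contains neither $x$ nor $y$, or contains both, then the relative order of the elements of $F\cup\{v\}$ is unchanged by the swap (note $v\notin\{x,y\}$ is impossible to assume outright — but if $v\in\{x,y\}$, say $v=x$, then $F\le_i x$ forces $y\notin F$ since $x,y$ are consecutive and $\{x,y\}\notin\Sigma(R)$ will be used below; I'd handle $v\in\{x,y\}$ together with the main case). The only case where something can go wrong is when exactly one of $x,y$ lies in $F$ and it is the one that gets moved past the other, or when $v$ is one of $x,y$.

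The heart of the argument is this single case: $x,y$ consecutive in $\le_i$ with $x\le_i y$ (say), $y\in F$, and $F\le_i y$ with $v=x$ — equivalently, after the swap $x\le'_i\cdots$ fails because $y$ now precedes $x$. Here I use the hypothesis $\{x,y\}\notin\Sigma(R)$: there is some vertex $w$ and some order $\le_k$ such that $w$ does *not* dominate $\{x,y\}$ in $\le_k$; combined with $\{x\},\{y\}\in\Sigma(R)$ (every $v$ dominates the singletons), one deduces in fact that every vertex dominates $\{x\}$ and every vertex dominates $\{y\}$, so the only reason $\{x,y\}\notin\Sigma(R)$ is that for each order $\le_j$, either $x\le_j y$ or $y\le_j x$ but no single order has both as its maximum among all of $V$... — more precisely, for every vertex $u$ there is an order with $x,y\le u$ but there is no order with $\{x,y\}\le u$ for *some* particular $u$; since singletons are dominated, this $u$ must be comparable and placed between. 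The useful consequence: for the specific potentially-bad $v=x$, since $\{y\}\in\Sigma(R)$, $x$ dominates $\{y\}$ in some order $\le_m$, i.e. $y\le_m x$; and because $x,y$ are consecutive in $\le_i$ only, $\le_m\neq\le_i$, so $\le'_m=\le_m$ and moreover $F\setminus\{y\}$ together with $y$... — I then argue $F\le_m x$ in this order, using that $F\le_i y$ and the consecutiveness to locate the other elements of $F$. The main obstacle is exactly this bookkeeping: showing that the order $\le_m$ witnessing $y\le_m x$ actually has *all* of $F$ below $x$, which requires carefully exploiting that $x$ and $y$ are *consecutive* in $\le_i$ (so any $f\in F$ with $f\le_i y$ and $f\neq y$ satisfies $f\le_i x$, and one leverages $F\in\Sigma(R)$ to pin down where such $f$ sit in $\le_m$). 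Once that case is closed, the symmetric subcase ($v=y$, or $x\in F$) follows by relabeling, and $\Sigma(R')\subseteq\Sigma(R)$ by symmetry, completing the proof.
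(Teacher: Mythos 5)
Your high-level plan is the right one and matches the paper's: use the involution to reduce to one inclusion, observe that only the order $\le_i$ is touched, isolate the single vertex--face configuration that could go wrong, and bring in the hypothesis $\{x,y\}\notin\Sigma(R)$ there. But the heart of the argument --- the step you yourself flag as ``the main obstacle'' --- is not carried out, and the route you sketch for it would not close. First, a fixable slip: with $x\le_i y$ the dangerous configuration is $v=y$ with $x\in F$ (before the swap $y$ may dominate $F$ only in $\le_i$ because $x=\max_{\le_i}F$ sits just below it, and after the swap $x$ jumps above $y$); the case you call the heart, $v=x$ with $y\in F$, is vacuous, since $x<_i y$ means $x$ never dominated such an $F$ via $\le_i$ in the first place, so its domination lives in an untouched order.

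The real gap is in how you use $\{x,y\}\notin\Sigma(R)$. You paraphrase it as ``there is some vertex $w$ and some order $\le_k$ such that $w$ does not dominate $\{x,y\}$ in $\le_k$,'' but that is far too weak (it is essentially always true); the hypothesis gives a vertex $w$ with $w<_j\max_{\le_j}(x,y)$ for \emph{every} $j$, and it is exactly this all-orders witness that the paper exploits. Your alternative --- extract from $\{y\}\in\Sigma(R)$ an order $\le_m$ with $y\le_m x$ and then ``argue $F\le_m x$'' --- has no mechanism behind it: nothing forces the other elements of $F$ below $x$ in $\le_m$, and consecutiveness in $\le_i$ says nothing about $\le_m$. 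The configuration you are trying to rescue is in fact impossible, and the proof is by contradiction: take the witness $w$; if $w=x$ then $x<_j y$ for all $j$ and $x$ fails to dominate $\{y\}$, contradicting $\{y\}\in\Sigma(R)$; otherwise $w\neq x$, and one checks $w<_j f_j$ for every $j$ (for $j=i$ because $w<_i y$, $w\neq x$ and $x,y$ are consecutive, so $w<_i x=f_i$; for $j\neq i$ because $y<_j f_j$, as $y$ dominates $F$ only in $\le_i$, together with $x\le_j f_j$ since $x\in F$), so $w$ dominates $F$ in no order, contradicting $F\in\Sigma(R)$. Without this witness argument your proof does not go through.
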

\begin{proof}
	Without loss of generality, we suppose that $i = 1$ and that $x \le_1 y $.
	Let us first show that $\Sigma(R') \subseteq \Sigma(R)$.
%	For any $F \in \Sigma(R)$ we show that $F \in \Sigma(R')$.
	Towards a contradiction, let us consider a face $F \in \Sigma(R)$ such that $F \not\in \Sigma(R')$.
	There exists therefore a vertex $z \in V$ which does not dominate $F$ in $R'$.
	As $z$ dominates $F$ in $R$, and as $\le_i = \le'_i$ for every $i\neq 1$, we thus have that $F \le_1 z$ and $F \not\le_1' z$.	This implies that $z = y$, that $x \in F$. As $x$ and $y$ are not adjacent, we have that $y\notin F$.
	Furthermore, $y$ only dominates $F$ in order $\le_1$ of $R$.
	We denote $f_i$ the maximum vertex of $F$ in order $\le_i$, then $f_1 = x$ and:

	$$\begin{array}{c|ccccccc}
	\le_1 & \cdots & x & y & \cdots \\
	\le_2 & \cdots & y & \cdots & f_2 & \cdots \\
	\vdots & & \vdots & & \vdots \\
	\le_d & \cdots & y & \cdots & f_d & \cdots \\
	\end{array}$$
	
	As $\{x,y\} \not\in \Sigma(R)$, there exists an element $w\in V$ such that $w <_i \max_{\le_i}(x,y)$ for every $i$.
	Thus either $w=x$, contradicting the fact that $\{y\}\in \Sigma(R)$, or $w <_i f_i$ for every $i$ (in particular for $i=1$ because $x$ and $y$ are consecutive), contradicting the fact that $F \in \Sigma(R)$.
	%So $F \in \Sigma(R')$.
		
	We showed that $\Sigma(R) \subseteq \Sigma(R')$,
	but as this operation is an involution, we have that $\Sigma(R') \subseteq \Sigma(R)$.
	Thus $\Sigma(R) = \Sigma(R')$.
\end{proof}

Note that $R'$ is such that for every edge $ab\in \Sigma(R)$ the orders between its endpoints are preserved. In other words, for any $j$ we have that $a\le'_j b$ if and only if $a\le_j b$.
Given a $d$-representation $R$, an {\it $(\le_i)$-increasing $xy$-path} is a path $(a_0=x, a_1, a_2 , \ldots , a_k=y)$ in $\Sigma(R)$ such that $a_j \le_i a_{j+1}$ for every $0\le j<k$. 

\begin{lemma}
    \label{lemma:increasing_path}
	Let $R$ be a $d$-representation on $V$, let $x$ be a vertex of $\Sigma(R)$ (i.e. $\{x\} \in \Sigma(R)$), and let $\le_i$ be any order of $R$. There exists a $d$-representation $R'$ such that:
	\begin{itemize}
	\item $\Sigma(R')=\Sigma(R)$,
	\item for every $j$, a path $P$ is $(\le'_j)$-increasing (in $\Sigma(R')$) if and only if it is $(\le_j)$-increasing (in $\Sigma(R)$), 
	\item $\le'_j \;\equiv\; \le_j$ for every $j\neq i$,
	\item $x \le'_i y$ if and only if there exists an $(\le'_i)$-increasing $xy$-path in $\Sigma(R')$ (thus if and only if there exists an $(\le_i)$-increasing $xy$-path in $\Sigma(R)$), and
	\item $a\le'_i b\le'_i x$ implies that $a\le_i b$.
	\end{itemize} 
\end{lemma}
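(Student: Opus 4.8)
I plan to build $R'$ from $R$ by a sequence of transpositions of consecutive elements in the order $\le_i$, each one an application of Lemma~\ref{lemma:repre_inversion}. With this approach the first two bullet points come for free: each transposition preserves $\Sigma$ by Lemma~\ref{lemma:repre_inversion}, and by the remark following that lemma each transposition preserves, in every order, the relative position of the two endpoints of every edge of $\Sigma(R)$, hence preserves every increasing path. So after the construction the work reduces to checking that $\le'_i$ has been rearranged in the right way, which gives the last three bullet points.

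Write $A$ for the set of $y\in V$ admitting a $(\le_i)$-increasing $xy$-path in $\Sigma(R)$; note $x\in A$ and that every $y\in A\setminus\{x\}$ satisfies $x<_i y$. Set $B=\{y\in V:\ x<_i y\}\setminus A$ and list $B=\{b_1,\dots,b_m\}$ with $b_1<_i b_2<_i\cdots<_i b_m$. The plan is to transform $\le_i$ into $\le'_i$ by sliding $b_1$, then $b_2$, \dots, then $b_m$ down to the position immediately below $x$, never touching the relative order of the remaining elements. At the step treating $b_k$ the elements $b_1,\dots,b_{k-1}$ already sit just below $x$, so every element strictly between $x$ and $b_k$ in the current order lies in $A\setminus\{x\}$; I then move $b_k$ down by successive transpositions with the element $z$ immediately below it, stopping once $b_k$ lies just below $x$ (the last such $z$ being $x$ itself).

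The crux is that every such transposition is legitimate, i.e.\ that $\{z,b_k\}\notin\Sigma(R)$, so that Lemma~\ref{lemma:repre_inversion} applies. In each of these transpositions $z$ is either an element of $A\setminus\{x\}$ or is $x$, and in both cases one checks that $z<_i b_k$; were $\{z,b_k\}$ an edge of $\Sigma(R)$, then appending it to a $(\le_i)$-increasing path from $x$ to $z$ (the trivial one-vertex path when $z=x$) would yield a $(\le_i)$-increasing path from $x$ to $b_k$, contradicting $b_k\in B$. Hence $\Sigma(R')=\Sigma(R)$ and all increasing paths are preserved (bullets 1 and 2); only $\le_i$ was altered (bullet 3); in $\le'_i$ the elements lying above $x$ are exactly those of $A\setminus\{x\}$, so $x\le'_i y$ holds iff $y\in A$, which with the preservation of $\Sigma$ and of increasing paths gives bullet 4; and since the only position changes were those of the $b_k$'s, the order $\le'_i$ restricted to the elements that are $\ge'_i x$ (namely $A$) is exactly $\le_i$ restricted to that set, which is what lies behind the last bullet.

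The step I expect to be the real work is this legality check together with the bookkeeping it relies on: one must fix the order of the moves (smallest $b_k$ first) precisely so that, while $b_k$ is being slid down, only elements of $A$ and then $x$ are ever transposed past it, and one must keep track that ``immediately below in the current order'' still entails ``$<_i$'' for the pairs actually swapped (this holds because $b_k$ only ever moves downward and the $A$-elements never move). A minor additional wrinkle is that $V$ may contain elements that are not vertices of $\Sigma(R)$: such an element lies in no face and in no edge, hence contributes nothing to $\Sigma(R)$ or to any path, and it is dealt with either by assuming that the vertex set of $\Sigma(R)$ is all of $V$ or by setting these elements aside before running the argument above.
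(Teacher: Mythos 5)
Your construction is essentially the paper's own: the paper also builds $R'$ by repeatedly applying Lemma~\ref{lemma:repre_inversion} to transpose the lowest element $z$ above $x$ admitting no $(\le_i)$-increasing $xz$-path with the reachable element immediately below it, the legality of each transposition resting on exactly your observation that such a pair cannot be an edge of $\Sigma(R)$; the paper merely packages the bookkeeping as an induction on the number of pairs $(y,z)$ with $x\le_i y<_i z$, $y$ reachable and $z$ not, whereas you schedule the same swaps explicitly. (Your ``wrinkle'' about elements of $V$ that are not vertices of $\Sigma(R)$ is real --- Lemma~\ref{lemma:repre_inversion} is stated only for pairs of vertices, and such elements must nonetheless end up below $x$ for the fourth bullet --- but the paper's proof is silent on the same point, so I do not count it against you.)

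The one place your verification goes astray is the last bullet. It concerns the elements that end up \emph{below} $x$ in $\le'_i$ (it reads $a\le'_i b\le'_i x$), not the set $A$ of elements above $x$; your remark that $\le'_i$ restricted to $A$ coincides with $\le_i$ is true but proves nothing about that bullet. What you actually need is that for $a,b$ among the elements $\le'_i x$, distinct from $x$, one has $a\le'_i b\Rightarrow a\le_i b$. Your construction does satisfy this, for a reason you already have in hand: every transposition you perform involves one element of $B$ and one element of $A\cup\{x\}$, so no pair of elements of $B\cup D$ (where $D$ denotes the elements originally below $x$) ever changes relative order --- $D$ is untouched, $b_1,\dots,b_m$ are inserted below $x$ in their original increasing order, and every $a\in D$ already satisfied $a<_i x<_i b_k$. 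So the fix is a one-line addition, but as written the proposal checks the wrong half of the order. (For the pair $(b_k,x)$ itself the literal implication $b_k\le'_i x\Rightarrow b_k\le_i x$ fails, but it fails for the paper's construction as well --- indeed for any construction satisfying the fourth bullet --- so that bullet is to be read for $b\neq x$.)
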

\begin{proof}
    First note that it is sufficient to prove the second item for length one paths (as longer paths are just concatenations of length one paths).
    We proceed by induction on $n$, the number of couples $(y,z)$ such that $x\le_i y <_i z$, such that there exists an $(\le_i)$-increasing $xy$-path, and such that there is no $(\le_i)$-increasing $xz$-path. In the initial case, $n=0$, as for every vertex $z$ such that $x\le_i z$ there exists an $(\le_i)$-increasing $xz$-path, we are done and $R'=R$.

    If $n>0$, consider such couple $(y,z)$ with the property that $y$ and $z$ are consecutive in $\le_i$ (by taking $z$ as the lowest element in $\le_i$ such that there is no $(\le_i)$-increasing $xz$-path).
    Note that $\{y,z\}\notin \Sigma(R)$ as otherwise, extending an $(\le_i)$-increasing $xy$-path with the edge $yz$ one obtains an $(\le_i)$-increasing $xz$-path.
	By Lemma~\ref{lemma:repre_inversion}, the $d$-representation $R'$ obtained by permuting $y$ and $z$ is such that $\Sigma(R') = \Sigma(R)$, such that the orders between the endpoints of any edge are preserved (and thus the increasing paths are preserved), such that $\le'_j \;\equiv\; \le_j$ for every $j\neq i$, such that for any two vertices $a$ and $b$ without $(\le'_i)$-increasing $xa$-path nor $xb$-path, $a\le'_i b$ if and only if $a\le_i b$, and has only $n-1$ couples $(y',z')$.
	%(\lucas{pourquoi $n-1$ ?}).
	%\dan{parce qu'ils survivent tous sauf $(y,z)$.}
	
    We can thus apply the induction hypothesis to $R'$ and we obtain that there exists a $d$-representation $R''$ such that $\Sigma(R'')=\Sigma(R')=\Sigma(R)$, such that the increasing paths are the same as in $\Sigma(R')$ (and thus as in $\Sigma(R)$), such that $\le''_j \;\equiv\; \le'_j \;\equiv\; \le_j$ for every $j\neq i$, such that  $x \le''_i y$ if and only if there exists an $(\le''_i)$-increasing $xy$-path in $\Sigma(R'')$, and such that $a\le''_i b\le''_i x$ implies that $a\le'_i b$, which implies $a\le_i b$ (as there is no $(\le'_i)$-increasing $xa$-path nor $xb$-path).
\end{proof}

\begin{lemma}
    \label{lemma:repre_path}
    Let $R$ be a $d$-representation on $V$.
    For any face $F \in \Sigma(R)$ and any vertex $x$ of $\Sigma(R)$, there exists an  $(\le_i)$-increasing $f_ix$-path for some order $\le_i\in R$, where $f_i$ is the maximal vertex of $F$ in order $\le_i$.
\end{lemma}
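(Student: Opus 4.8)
The plan is to argue by contradiction. Assume that for every $i\in\II1d$ there is no $(\le_i)$-increasing $f_ix$-path in $\Sigma(R)$; the goal is then to produce a rearranged representation of $\Sigma(R)$ in which $x$ dominates $F$ in no order, contradicting $F\in\Sigma(R)$. (One may assume $F\neq\emptyset$, so that the $f_i$ are defined; the contradiction hypothesis then also forces $x\notin F$, since $x=f_j$ would make the length-zero path $(x)$ a $(\le_j)$-increasing $f_jx$-path. I also use the standing convention, available by the remark after the definition of representations, that every element of $V$ is a vertex of $\Sigma(R)$.)

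The main device is a one-order-at-a-time normalization via Lemma~\ref{lemma:increasing_path}. Since $F$ is a face, each $f_i=\max_{\le_i}F$ satisfies $\{f_i\}\in\Sigma(R)$, so Lemma~\ref{lemma:increasing_path} applies with base vertex $f_i$ and order $\le_i$. I would put $R^{(0)}=R$ and, for $i=1,\dots,d$, let $R^{(i)}$ be the representation obtained from $R^{(i-1)}$ by applying Lemma~\ref{lemma:increasing_path} with the vertex $f_i$ and the order $\le_i$. Each application leaves $\Sigma(\cdot)$ unchanged, leaves the family of $(\le_j)$-increasing paths unchanged for every $j$, and modifies only the $i$-th order; hence the final representation $R^{*}=R^{(d)}$, with orders $\le^{*}_1,\dots,\le^{*}_d$, satisfies $\Sigma(R^{*})=\Sigma(R)$, has exactly the same $(\le_j)$-increasing paths as $R$ for each $j$, and, by the fourth item of Lemma~\ref{lemma:increasing_path} (the path-characterization) applied at each step, satisfies for every $i$ and every $v\in V$
$$f_i\le^{*}_i v\ \Longleftrightarrow\ \text{there is a }(\le_i)\text{-increasing }f_iv\text{-path in }\Sigma(R).$$

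From here I would read off, for each $i$, that $f_i$ sits strictly above $x$ and is still the $\le^{*}_i$-maximum of $F$. Taking $v=x$ in the displayed equivalence, the contradiction hypothesis gives $x<^{*}_i f_i$. Taking $v\in F$ with $v\neq f_i$: in $R$ one has $v<_i f_i$, so there is no $(\le_i)$-increasing $f_iv$-path (its endpoints would satisfy $f_i\le_i v$), hence $v<^{*}_i f_i$; thus $\max_{\le^{*}_i}F=f_i$. Combining, $\max_{\le^{*}_i}F=f_i>^{*}_i x$, i.e.\ $F\not\le^{*}_i x$, for every $i$, so $x$ dominates $F$ in no order of $R^{*}$. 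But $F\in\Sigma(R)=\Sigma(R^{*})$, so every element of $V$---in particular $x$---must dominate $F$ in $R^{*}$: contradiction.

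The step I expect to be the main obstacle is the bookkeeping of the iterative normalization: one must check that re-sorting the $i$-th order with Lemma~\ref{lemma:increasing_path} disturbs neither $\Sigma$, nor the increasing paths, nor the normalizations already carried out on the other orders, and leaves $f_i$ as the maximum of $F$ in order $i$. All of this is exactly what the (lengthy) list of guarantees in Lemma~\ref{lemma:increasing_path} provides, but it must be tracked consistently across all $d$ steps.
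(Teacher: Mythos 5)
Your proof is correct and rests on the same key tool as the paper's, namely Lemma~\ref{lemma:increasing_path}: the paper organizes the argument as an induction on the number of orders in which $x$ dominates $F$, peeling off one order at a time and either finding the desired increasing path or permuting so that $x <'_i f_i$, whereas you normalize all $d$ orders at once and conclude by contradiction, but the underlying mechanism is identical. (Your side remarks about $x\notin F$ and about every element of $V$ being a vertex are not actually load-bearing; the contradiction only needs that $x\in V$ fails to dominate $F$ in every order of the normalized representation.)
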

\begin{proof}
    We proceed by induction on $n$, the number of orders $\le_i \in R$ such that $  F \le_i x$.
	If $n=0$ then $F\notin \Sigma(R)$, a contradiction. So the lemma holds by vacuity.
	If $n>0$, consider an order $\le_i \in R$ such that $F \le_i x$. By Lemma~\ref{lemma:increasing_path} (applied to $f_i$ in $\le_i$) either $\Sigma(R)$ contains an $(\le_i)$-increasing $f_ix$-path, and we are done, or there exists a $d$-representation $R'$ such that $\Sigma(R')=\Sigma(R)$, such that the increasing paths are the same as in $\Sigma(R)$, such that $\le'_j \;\equiv\; \le_j$ for every $j\neq i$, and such that $x <'_i f_i$. In this case, we apply the induction hypothesis on $R'$. Indeed, $F \in \Sigma(R')$ and $R'$ has only $n-1$ orders $\le'_j$ such that $F \le'_j x$. Note that as every pair in $F$ corresponds to an edge, the maximal vertices of $F$ in $\le'_j$ and $\le_j$ are the same for every $j$. The induction thus provides us an $(\le'_j)$-increasing $f_jx$-path and this path is also $(\le_j)$-increasing in $\Sigma(R)$.
\end{proof}

\begin{lemma}
    \label{lemma:path_to_nonface}
    Let $R$ be a $d$-representation on $V$.
    For any vertex set $F \not\in \Sigma(R)$, there exists a vertex $x$  which does not dominate $F$, and 
    such that for every $\le_i \in R$ there exists an $(\le_i)$-increasing $xf^i$-path for some vertex $f^i\in F$.
\end{lemma}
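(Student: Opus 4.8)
The plan is to reshape the representation $R$, by repeated use of Lemma~\ref{lemma:increasing_path}, into one in which, for a single well-chosen base vertex, the relation ``$x\le_i y$'' becomes equivalent to ``there is an $(\le_i)$-increasing $xy$-path''. The vertex $x$ promised by the statement will be that base vertex, and the desired paths will then fall out of the mere fact that $x$ does not dominate $F$.

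First I would pick $x$. Since $\emptyset\in\Sigma(R)$ and $F\notin\Sigma(R)$, the set $F$ is nonempty and at least one element of $V$ fails to dominate $F$; among all such elements choose one, $x$, that is minimal for all the orders $\le_i$ simultaneously, that is, minimal for the partial order $\le_1\cap\dots\cap\le_d$. Such an $x$ exists because the non-dominators of $F$ form a nonempty down-set for this partial order, and one checks that being minimal for $\le_1\cap\dots\cap\le_d$ in $V$ is the same as being a vertex of $\Sigma(R)$; thus $x$ is a vertex of $\Sigma(R)$ not dominating $F$, and it remains only to produce, for each $i$, an $(\le_i)$-increasing path in $\Sigma(R)$ from $x$ to some element of $F$.

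Next I would apply Lemma~\ref{lemma:increasing_path} $d$ times in succession, the $k$-th time with base vertex $x$ and order $\le_k$, producing $R=R^{(0)},R^{(1)},\dots,R^{(d)}=:R^{\star}$. Each application preserves $\Sigma$, preserves the family of increasing paths of every order, and changes only the order being treated; hence the reachability property gained for one order is not spoiled by the later steps, and in $R^{\star}$ we have, for every $i$ and every $y$, that $x\le^{\star}_i y$ if and only if $\Sigma(R^{\star})=\Sigma(R)$ contains an $(\le_i)$-increasing path from $x$ to $y$. The single delicate point is to verify that $x$ never becomes a dominator of $F$ during this process, and this is exactly where the last clause of Lemma~\ref{lemma:increasing_path}, ``$a\le'_i b\le'_i x$ implies that $a\le_i b$'', enters: taking $b=x$ it gives $f\le^{(k)}_k x\Rightarrow f\le^{(k-1)}_k x$ for every $f$, hence $F\le^{(k)}_k x\Rightarrow F\le^{(k-1)}_k x$, and since the remaining orders are untouched at step $k$, an immediate induction shows that $x$ dominates $F$ in none of $R^{(1)},\dots,R^{(d)}$.

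Finally, since $x$ does not dominate $F$ in $R^{\star}$, for each order $\le^{\star}_i$ there is some $f^{i}\in F$ with $x<^{\star}_i f^{i}$, and the reachability property of $R^{\star}$ then supplies an $(\le_i)$-increasing path from $x$ to $f^{i}$ in $\Sigma(R^{\star})=\Sigma(R)$, which is the desired conclusion. I expect the only real work to be the bookkeeping of what each invocation of Lemma~\ref{lemma:increasing_path} leaves invariant, the crux being the observation that its technical last clause is precisely what keeps $x$ a non-dominator of $F$.
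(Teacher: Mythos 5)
Your argument is correct, and it reaches the conclusion by a somewhat different route than the paper. The paper also leans entirely on Lemma~\ref{lemma:increasing_path}, but it organizes the proof as an induction on the number of elements that do not dominate $F$: it picks an arbitrary non-dominator $x$, applies Lemma~\ref{lemma:increasing_path} to a single order in which $x$ cannot reach $F$, observes that in the resulting $R'$ the vertex $x$ has become a dominator (so the count of non-dominators strictly drops, using that dominators stay dominators), and recurses --- so the witness vertex ultimately produced may differ from the one first considered. You instead fix one witness from the start and normalize all $d$ orders in succession, so that for that single vertex domination in each order coincides with reachability; the technical last clause of Lemma~\ref{lemma:increasing_path} is exactly what you need to check that $x$ never becomes a dominator along the way, and your bookkeeping of the invariants (preservation of $\Sigma$, of increasing paths, and of the untouched orders) is sound. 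Your choice of $x$ as a minimal non-dominator in $\le_1\cap\dots\cap\le_d$ is also a genuine improvement in rigor: Lemma~\ref{lemma:increasing_path} requires its base point to be a vertex of $\Sigma(R)$, a hypothesis the paper's ``consider any such element $x$'' silently skips, whereas your down-set argument guarantees it. The trade-off is that the paper's induction is shorter to state, while your version is more explicit about which vertex witnesses the lemma and avoids having to track a changing witness across recursive calls.
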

\begin{proof}
	We proceed by induction on the number $n$ of elements which do not dominate $F$.
	If $n=0$ then $F\in \Sigma(R)$, a contradiction. So the lemma holds by vacuity.
    If $n>0$, consider any such element $x$ which does not dominate $F$.
    By Lemma~\ref{lemma:increasing_path} either there are $(\le_i)$-increasing $xf^i$-paths for every $\le_i\in R$ and we are done, or there exists a $d$-representation $R'$ such that $\Sigma(R')=\Sigma(R)$, 
    such that the increasing paths are the same as in $\Sigma(R)$,
    such that $F \le'_i x$ for some order $\le'_i \in R'$, 
    such that $a\le'_i b\le'_i x$ implies that $a\le_i b$,
    and such that $\le'_j \;\equiv\; \le_j$ for every $j\neq i$.
    
    In this case, we apply the induction hypothesis on $R'$. Indeed, as any element not dominating $F$ in $R'$ does not dominate $F$ in $R$, and as $x$ is not dominating $F$ in $R'$, we have at most $n-1$ elements that do not dominate $F$ in $R'$. By induction hypothesis there exists a vertex $x$ such that for every $\le'_i \in R'$, $\Sigma(R')$ has an $(\le'_i)$-increasing $xf^i$-path for some vertex $f^i\in F$, and this path is also $(\le'_i)$-increasing in $\Sigma(R)$. 
\end{proof}

\section{TD-Delaunay complexes}

TD-Delaunay graphs have been introduced by Chew in \cite{ChewTD}.
Here we generalize this definition to higher dimensions.
We recall that a positive homothety $h$ of $\RR^d$ is an affine transformation of $\RR^d$ defined by $ h(M) = \alpha M + (1-\alpha) \Omega $ where $\alpha \ge 0$ and $\Omega \in \RR^d$.
The coordinates of a point $x \in \RR^d$ will be denoted $(x_1, \ldots , x_d)$.
	
For any integer $d$, let $H_d$ be the $(d-1)$-dimensional hyperplane of $\RR^d$ defined by $\{ x \in \RR^d : x_1 + \cdots + x_d = 1 \}$. 
Given $c = (c_1, \ldots , c_d) \in \RR^d$, we define a \mmdef{regular simplex} $S_c$ of $H_d$ by setting $S_c = \{ u \in H_d : u_i \le c_i, \forall i \in \II1d \}$.
A regular simplex $S_c$ is said to be \mmdef{positive} if $\sum_{i=1}^d c_i \ge 1$.
For $c = (1,\ldots , 1) = \mathds{1}$ we call $S_\mathds{1}$ the \emph{canonical regular simplex}.
In this context, a point set $\mathcal{P} \subset H_d$ is in
\mmdef{general position} if for any two vertices $x,y\in \mathcal{P}$,
$x_i \not= y_i$ for every $i \in \II1d$.
The \mmdef{interior} $\mathring{S_c}$ of a regular simplex $S_c$ is defined by $\mathring{S_c} = \{ u \in H_d : u_i < c_i , \forall i \in \II1d \}$

\begin{proposition}
	The positive regular simplices of $H_d$ are the subsets of $H_d$ positively homothetic to $S_\mathds{1}$.
\end{proposition}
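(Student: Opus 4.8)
The plan is to build an explicit dictionary between the vector $c$ defining a regular simplex $S_c$ and the ratio and center of a positive homothety, and then to verify the two inclusions by elementary arithmetic inside $H_d$. Assume $d\ge 2$ (for $d=1$ the statement is trivial, $H_1$ being a single point). Write $\sigma:=\sum_{i=1}^d c_i$; the quantity that governs everything is the ratio $\alpha:=\frac{\sigma-1}{d-1}$, which is nonnegative exactly when $S_c$ is positive and which equals $1$ exactly when $\sigma=d$.

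First I would treat the inclusion ``a positive homothety image of $S_\mathds{1}$ lying in $H_d$ is a positive regular simplex''. Let $h(M)=\alpha M+(1-\alpha)\Omega$ with $\alpha\ge 0$ and $h(S_\mathds{1})\subseteq H_d$. Since $S_\mathds{1}$ affinely spans $H_d$, this forces $h(H_d)\subseteq H_d$; comparing coordinate sums gives $\alpha+(1-\alpha)\sum_i\Omega_i=1$, so either $\alpha=1$ (whence $h$ is the identity, or a translation by a direction of $H_d$ if translations are admitted as degenerate homotheties) or $\sum_i\Omega_i=1$, i.e. $\Omega\in H_d$. In the generic case $\alpha\notin\{0,1\}$ the map $h$ is invertible, a point $v\in H_d$ lies in $h(S_\mathds{1})$ iff $h^{-1}(v)\in S_\mathds{1}$, and writing $h^{-1}(v)_i=\tfrac1\alpha(v_i-\Omega_i)+\Omega_i$ this is equivalent to $v_i\le\alpha+(1-\alpha)\Omega_i$ for every $i$. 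Putting $c_i:=\alpha+(1-\alpha)\Omega_i$ gives $h(S_\mathds{1})=S_c$ with $\sum_i c_i=d\alpha+(1-\alpha)=1+(d-1)\alpha\ge 1$, a positive regular simplex. The cases $\alpha=0$ (where $h(S_\mathds{1})=\{\Omega\}=S_\Omega$, so $\sigma=1$) and $\alpha=1$ are checked directly.

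For the converse, start from a positive regular simplex $S_c$, so $\sigma\ge 1$ and $\alpha=\frac{\sigma-1}{d-1}\ge 0$. When $\alpha\ne 1$, define $\Omega_i:=\frac{c_i-\alpha}{1-\alpha}$; the identity $\sum_i\Omega_i=\frac{\sigma-d\alpha}{1-\alpha}=1$ holds precisely because of the chosen value of $\alpha$, so $\Omega\in H_d$ and $h(M):=\alpha M+(1-\alpha)\Omega$ is a positive homothety, and by the formula of the previous step $h(S_\mathds{1})=S_{c'}$ with $c'_i=\alpha+(1-\alpha)\Omega_i=c_i$, i.e. $h(S_\mathds{1})=S_c$. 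When $\alpha=1$ (i.e. $\sigma=d$) the vector $c-\mathds{1}$ has zero coordinate sum, hence is a direction of $H_d$, and $S_c=S_\mathds{1}+(c-\mathds{1})$ is a translate of $S_\mathds{1}$. A uniform formulation covering all cases: $S_c$ is the image of $S_\mathds{1}$ under the affine map $M\mapsto\alpha M+(c-\alpha\mathds{1})$, and one checks that this sends $S_\mathds{1}$ onto $S_c$ using $x_i\le 1\iff\alpha x_i+c_i-\alpha\le c_i$ (for $\alpha>0$) together with $\sum_i(\alpha x_i+c_i-\alpha)=\alpha+\sigma-d\alpha=1$; alternatively one may match the $d$ vertices of $S_c$ (namely $w^{(j)}$ with $w^{(j)}_i=c_i$ for $i\ne j$ and $w^{(j)}_j=1-\sigma+c_j$) to the $d$ vertices of $S_\mathds{1}$.

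The routine ingredients are the two coordinate-sum computations and the inversion of $h$. The only genuine subtlety is the bookkeeping at the degenerate ratios $\alpha=0$ (the simplex collapses to a point, $\sigma=1$) and $\alpha=1$ (the homothety degenerates to a translation, $\sigma=d$): one must be explicit that translations of $S_\mathds{1}$ inside $H_d$ are to be regarded as degenerate positive homotheties, matched exactly by the simplices $S_c$ with $\sigma=d$. I expect this edge case, rather than any real geometric difficulty, to be the main thing to get right.
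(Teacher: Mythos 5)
Your proof is correct and takes essentially the same route as the paper's: the same dictionary $c_i=\alpha+(1-\alpha)\Omega_i$ with $\alpha=(\sigma-1)/(d-1)$, and the same two coordinate-sum verifications for the two inclusions. You are in fact more careful than the paper at the degenerate ratio $\alpha=1$, where the formula $\Omega_i=(c_i-\alpha)/(1-\alpha)$ breaks down and $S_c$ is only a translate of $S_{\mathds{1}}$ --- a case the paper waves off with ``well defined even if $\alpha=0$ or $1$'' --- so your explicit treatment of that edge case is a genuine (if minor) improvement.
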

\begin{proof}
    Let $U$ be a subset of $H_d$ positively homothetic to $S_\mathds{1}$.
    Let us show that there exists $c \in \RR^d$ such that $U = S_c$ and $\sum_{i=1}^d c_i \ge 1$.
    There exists $h$ a positive homothety of $\RR^d$ of ratio $\alpha$ and center $\Omega \in H_d$ such that $h(U) = S_\mathds{1}$.
	We define $c = (c_1, \ldots , c_d) \in \RR^d$ by $c_i = \alpha + (1 - \alpha)\Omega_i$.
	We show that $h(S_\mathds{1}) = S_c$.
	If $\alpha = 0$ it is clear that $h(S_\mathds{1}) = \{\Omega\}$ and that $S_c = \{\Omega\}$.
	We can therefore suppose that $\alpha >0$.
	
	Let $u \in S_\mathds{1}$.
	Then $u_i \le 1$ for every $i \in \II1d$.
	Then $h(u)_i = \alpha u_i + (1-\alpha) \Omega_i \le \alpha + (1-\alpha) \Omega_i = c_i$.
	Furthermore $\sum_{i=1}^d h(u)_i = \alpha \sum_{i=1}^d u_i + (1- \alpha) \sum_{i=1}^d \Omega_i = \alpha + (1- \alpha) = 1$.
	Then $h(u) \in S_c$ and thus $h(S_\mathds{1}) \subseteq S_c$.
	
	Let $v \in S_c$.
	Then $v_i \le c_i$ for every $i \in \II1d$.
	Because $\alpha \not=0$, we define $u$ such that $h(u) = v$, that is such that $\alpha u_i + (1- \alpha) \Omega_i = v_i$. 
	As $ 1 = \sum_{i=1}^d v_i = \alpha \sum_{i=1}^d u_i + (1-\alpha) \sum_{i=1}^d \Omega_i = \alpha \sum_{i=1}^d u_i + (1- \alpha)$, then $u \in H_d$.
	Furthermore for every $i \in \II1d$, $ \alpha u_i + (1-\alpha) \Omega_i = v_i \le \alpha + (1-\alpha) \Omega_i$.
	Thus $u_i \le 1$.
	So, $u \in S_\mathds{1}$.
	
	We conclude that $h(S_\mathds{1}) = S_c$.
	Furthermore if $\alpha \ge 0$ then $S_c$ is positive.
	Indeed $\sum_{i=1}^d c_i = d \alpha + (1- \alpha) \sum_{i=1}^d \Omega_i = (d-1) \alpha + 1 \ge 1$.
	We deduce that every subset of $H_d$ which is positively homothetic to $S_\mathds{1}$ is a positive regular simplex.
	
	Let $S_c$ be a positive regular simplex of $H_d$ with $c \in \RR^d$ such that $\sum_{i=1}^d c_i \ge 1$.
	We look for $\alpha \ge 0$ and $\Omega \in H_d$ such that $c = \alpha + (1-\alpha) \Omega$.
	Suppose that such an $\alpha \ge 0$ and an $\Omega$ exist.
	Then $c_i = \alpha + (1-\alpha) \Omega_i$ for every $i \in \II1d$.
	Then $\sum_{i=1}^d c_i = (d-1) \alpha + 1$.
	Thus $\alpha = (\sum_{i=1}^d c_i -1)/(d-1) \ge 0$ and $\Omega_i = (c_i - \alpha)/(1-\alpha)$.
	It is easy to check that this gives the desired $\alpha$ and $\Omega$ and that they are well defined even if $\alpha = 0$ or $1$.
	We conclude that $S_c$ is positevly homothetic to $S_\mathds{1}$ in $H_d$.
\end{proof}

	%We will denote $\mathsf{conv}(P)$ the convex hull of a set $P$ of points of $\RR^d$.

	%The image of a convex hull of a finite number of points by a homothety is the convex hull of the image of these points.
	
	%The regular simplex $S_0$ is defined as $\mathsf{conv}( e_1 , \ldots , e_{d+1})$ where $e_1 , \ldots , e_{d+1}$ is the canonical base of $\RR^{d+1}$.
	
Let us now define TD-Delaunay simplicial complexes by extending the notion of TD-Delaunay graph defined by Chew and Drysdale~\cite{CD}.

\begin{definition}
	Given a set $\PP$ of points of $H_d$ ($\subset \RR^d$) in general position, let the TD-Delaunay complex of $\mathcal{P}$, denoted $\TDD( \mathcal{P})$, be the simplicial complex with vertex set $\mathcal{P}$ defined as follows.
	A subset $F \subseteq \PP$ is a face of $\TDD(\PP)$ if and only if there exists a positive regular simplex $S$ such that $S \cap \PP = F$ and such that no point of $\PP$  is in the interior of $S$.
\end{definition}

Let $F \subseteq \PP$, we define $c^F \in \RR^d$ by $c^F_i = \max_{x \in F } x_i$.
Remark that $\sum_{i=1}^d c^F_i \ge 1$ because for every $x \in F$, $\sum_{i=1}^d c^F_i \ge \sum_{i=1}^d x_i = 1$.

\begin{lemma}
    \label{lemma:face_caracterization}
    For any set $F \subseteq \PP$, $F \in \TDD(\PP)$ if and only if $S_{c^F}$ does not contain any point of $\PP$ in its interior.
\end{lemma}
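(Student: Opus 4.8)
The plan is to prove both implications, using the observation that $S_{c^F}$ is the \emph{smallest} positive regular simplex containing $F$. First I would record this minimality fact precisely: if $S_c$ is any positive regular simplex with $F \subseteq S_c$, then for every $x \in F$ and every $i \in \II1d$ we have $x_i \le c_i$, hence $c^F_i = \max_{x\in F} x_i \le c_i$, which gives $S_{c^F} \subseteq S_c$ (since $u_i \le c^F_i \le c_i$ for all $i$). Moreover $S_{c^F}$ is itself a positive regular simplex, because $\sum_i c^F_i \ge 1$ as already remarked in the text just before the lemma, and by the preceding Proposition this is exactly the condition for $S_{c^F}$ to be positively homothetic to the canonical simplex.

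For the ``if'' direction, suppose $S_{c^F}$ contains no point of $\PP$ in its interior. Set $S = S_{c^F}$; I claim $S \cap \PP = F$, which together with the hypothesis shows $F \in \TDD(\PP)$ by definition. Clearly $F \subseteq S \cap \PP$ since $c^F_i = \max_{x\in F} x_i$ forces every $x \in F$ into $S_{c^F}$. For the reverse inclusion, take $y \in S_{c^F} \cap \PP$ with $y \notin F$. Since $\PP$ is in general position, for each $i$ we have $y_i \ne x_i$ for every $x\in F$; combined with $y_i \le c^F_i = \max_{x\in F} x_i$, this gives the \emph{strict} inequality $y_i < c^F_i$ for every $i$, so $y \in \mathring{S_{c^F}}$, contradicting the hypothesis. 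Hence $S_{c^F}\cap\PP = F$, as desired.

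For the ``only if'' direction, suppose $F \in \TDD(\PP)$, so there is a positive regular simplex $S$ with $S \cap \PP = F$ and $\mathring{S} \cap \PP = \emptyset$. Since $F \subseteq S$, the minimality fact gives $S_{c^F} \subseteq S$, hence $\mathring{S_{c^F}} \subseteq \mathring{S}$ (the interior of a homothet is the homothet of the interior, so the inclusion of simplices passes to interiors; alternatively one checks directly that $u_i < c^F_i \le c_i$). Therefore $\mathring{S_{c^F}} \cap \PP \subseteq \mathring{S} \cap \PP = \emptyset$, which is exactly the conclusion that $S_{c^F}$ contains no point of $\PP$ in its interior.

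The only mildly delicate point is the general-position argument in the ``if'' direction — making sure that membership in the closed simplex $S_{c^F}$ plus distinctness of coordinates upgrades to membership in the open simplex; everything else is bookkeeping with the coordinatewise description of regular simplices and the already-proved Proposition characterizing positive regular simplices as positive homothets of $S_\mathds{1}$. I would also make explicit, once and for all, that $\mathring{S_{c^F}}$ taken relative to $H_d$ is the correct notion of ``interior'' here, matching the definition of $\TDD$.
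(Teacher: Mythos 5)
Your proposal is correct and follows essentially the same route as the paper: the ``only if'' direction uses the minimality $S_{c^F}\subseteq S_c$ (coordinatewise $c^F_i\le c_i$, which passes to interiors), and the ``if'' direction uses general position to upgrade $y_i\le c^F_i$ to $y_i< c^F_i$ for a point $y\in S_{c^F}\cap\PP\setminus F$, showing $S_{c^F}\cap\PP=F$. The paper phrases the general-position step contrapositively (a boundary point must satisfy $y_i=c^F_i=x_i$ for some $x\in F$), but the argument is the same.
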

\begin{proof}
    Suppose that $F \in \TDD(\PP)$.
    Then there exists a positive regular simplex $S_c$ such that $S \bigcap \PP = F$ and such that no point of $\PP$ is in the interior of $S$.
%    There exists $c \in \RR^d$ such that $\sum_{i=1}^d c_i \ge 1$ and $S = S_c$.
    As $S_c$ contains $F$, for every $x \in F$ and every $i \in \II1d$, $x_i \le c_i$.
    Thus $c^F_i \le c_i$ for every $i$ and $S_{c^F} \subseteq S_c$.
    Therefore $S_{c^F}$ does not contain any point of $\PP$ in its interior otherwise $S_c$ would contain some.

    Suppose that $S_{c^F}$ does not contain any point of $\PP$ in its interior.
    Let $x \in F$.
    By definition of $c^F$, $x_i \le c^F_i$ for every $i$.
    So $x \in S_{c^F}$ and $S_{c^F}$ contains $F$.
    Let $x \in \PP \setminus F$.
    If $x$ is in $S_{c^F}$, then $x$ is not in the interior of $S_{c^F}$.
    So there exists $i \in \II1d$ such that $x_i = c^F_i$.
    But there exists $y \in F$ (different from $x$) such that $y_i = c^F_i$.
    This contradicts the fact that the points of $\PP$ are in general position.
    Therefore $S_{c^F} \cap \PP = F$ and $F \in \TDD(\PP)$.
\end{proof}

\begin{proposition}
	For any point set $\PP$ in general position in $H_d$ ($\subset \RR^d$), $\TDD(\PP)$ is an abstract simplicial complex.
\end{proposition}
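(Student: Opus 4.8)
The plan is to show that $\TDD(\PP)$ is closed under taking subsets, which by definition means showing that whenever $F \in \TDD(\PP)$ and $G \subseteq F$, then also $G \in \TDD(\PP)$. The natural tool is Lemma~\ref{lemma:face_caracterization}, which reduces membership in $\TDD(\PP)$ to the condition that a specific positive regular simplex, namely $S_{c^F}$, contains no point of $\PP$ in its interior. So the task becomes: given that $S_{c^F}$ has empty interior intersection with $\PP$, deduce the same for $S_{c^G}$.

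First I would observe that since $G \subseteq F$, for every coordinate $i \in \II1d$ we have $c^G_i = \max_{x \in G} x_i \le \max_{x \in F} x_i = c^F_i$. This immediately gives the inclusion $S_{c^G} \subseteq S_{c^F}$ (indeed, if $u \in H_d$ satisfies $u_i \le c^G_i$ for all $i$, then $u_i \le c^F_i$ for all $i$, so $u \in S_{c^F}$). Consequently $\mathring{S_{c^G}} \subseteq \mathring{S_{c^F}}$ as well, since the interior is obtained by replacing the non-strict inequalities by strict ones and the same monotonicity applies. (One should also note $S_{c^G}$ is a genuine positive regular simplex: $\sum_i c^G_i \ge 1$ by the remark preceding Lemma~\ref{lemma:face_caracterization}.) Since $\PP$ meets $\mathring{S_{c^F}}$ in no point, a fortiori it meets the smaller set $\mathring{S_{c^G}}$ in no point. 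By Lemma~\ref{lemma:face_caracterization} applied to $G$, this yields $G \in \TDD(\PP)$.

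Finally, I would note that the empty set belongs to $\TDD(\PP)$ vacuously (it is a subset of any face, and any face of $\TDD(\PP)$ exists as soon as $\PP$ is nonempty; or directly, the condition is trivially satisfiable), so $\TDD(\PP)$ is nonempty and closed under inclusion, hence an abstract simplicial complex in the sense of the definition given at the start of the paper. I do not expect any real obstacle here: the entire argument is the monotonicity $G \subseteq F \Rightarrow c^G \le c^F$ coordinatewise, combined with the characterization lemma; the only point requiring a line of care is confirming that $S_{c^G}$ is still \emph{positive} so that Lemma~\ref{lemma:face_caracterization} genuinely applies, but this is handled by the remark that $\sum_{i=1}^d c^G_i \ge \sum_{i=1}^d x_i = 1$ for any $x \in G$ (using that $G$ is nonempty; the case $G = \emptyset$ being trivial).
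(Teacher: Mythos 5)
Your proposal is correct and follows exactly the paper's own argument: coordinatewise monotonicity $c^G_i \le c^F_i$ gives $S_{c^G} \subseteq S_{c^F}$, and Lemma~\ref{lemma:face_caracterization} transfers the empty-interior condition from $F$ to $G$. The extra remarks on positivity of $S_{c^G}$ and the empty face are fine and only add care the paper leaves implicit.
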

\begin{proof}
    Consider any $F \in \TDD(\PP)$, and any $G \subsetneq F$.
    Then $c^G_i \le c^F_i$ for every $i$.
   So $S_{c^G} \subseteq S_{c^F}$.
   $S_{c^G}$ does not contain any point of $\PP$ in its interior otherwise $S_{c^F}$ would contain some.
   Thus because of the previous lemma, $G \in \TDD(\PP)$ and we conclude that $\TDD(\PP)$ is an abstract simplicial complex.
    %Furthermore there exists $x \in F \setminus G$.
   %There exists $j \in \II1d$ such that $x_j = c^F_j$.
   %Otherwise $x_i < c^F_i$ for every $i$ and $x$ lies in the interior of $S_{c^F}$ which is excluded.
   %So $c^G_j < c^F_j$ , otherwise there would exist $y \in G$ (different from $x$) such that $y_j = x_j$.
\end{proof}

\begin{figure}[!h]
	\centering
	\includegraphics[scale=0.8]{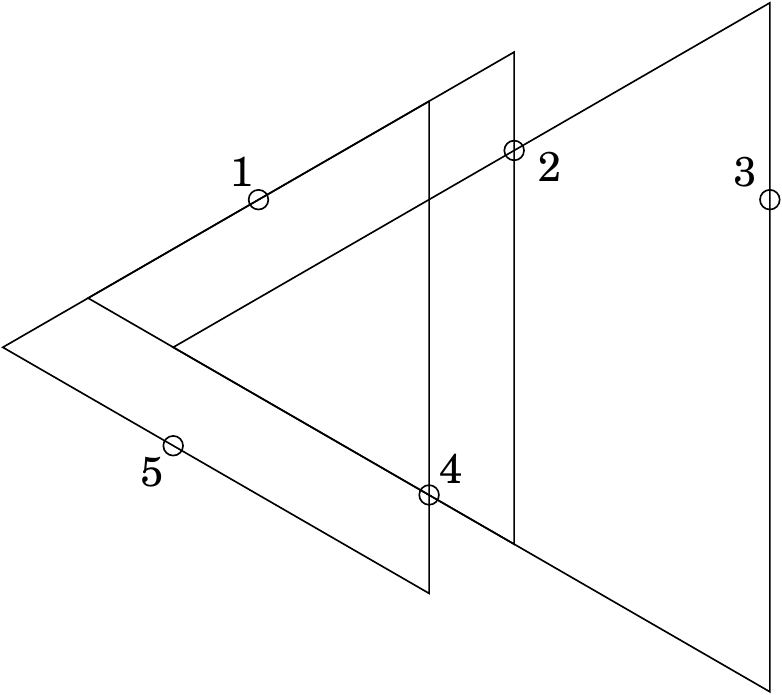}
	\caption{An example of a point set which TD-Delaunay complex is the simplicial complex with facets $\{1,4,5\}, \{1,2,4\}$ and $\{2,3,4\}$.}% \lucas{uniformiser dessins}}
\end{figure}

Consider a point set $\PP$ of $\RR^d$ in general position.
We define the orders $\le_i$ on $\PP$ as $x \le_i y$ if and only if $x_i \le y_i$, in other words if and only if $x_i - y_i \le 0$. First, note that as the points are in general position these orders are well defined. Note also that the values $x_i$ for $x\in \PP$ and $i\in \II1d$ form a solution of a linear system of inequalities (with inequalities of the form $x_i - y_i \le 0$).
In the following we connect TD-Delaunay complexes to representations through systems of inequalities.
To do so we consider these $d$ orders as a $d$-representation denoted $R(\PP)$. If $\PP\subset H_d$ this $d$-representation is closely related to $\TDD(\PP)$.

\begin{theorem}\label{thm:TDDisDMd}
For any point set $\PP$ of $H_d\subset \RR^{d}$ in general position, we have that $\TDD(\PP) = \Sigma(R(\PP))$. Thus, any TD-Delaunay complex of $H_d\simeq \RR^{d-1}$ has Dushnik-Miller dimension at most $d$.
\end{theorem}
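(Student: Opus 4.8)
The plan is to combine the characterisation of faces of $\TDD(\PP)$ given by Lemma~\ref{lemma:face_caracterization} with the definition of $\Sigma(R(\PP))$, and to observe that for a subset $F\subseteq\PP$ both membership conditions translate into the \emph{same} statement about the numbers $c^F_i=\max_{x\in F}x_i$.

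First I would unwind $\Sigma(R(\PP))$. By definition $F\in\Sigma(R(\PP))$ if and only if every vertex $v\in\PP$ dominates $F$ in some order $\le_i$ of $R(\PP)$, i.e. if and only if for every $v\in\PP$ there is an index $i$ such that $f_i\le v_i$ for all $f\in F$; since $c^F_i=\max_{f\in F}f_i$, this says exactly: for every $v\in\PP$ there is an index $i$ with $c^F_i\le v_i$. Next I would unwind the geometric side. As $\PP\subset H_d$, a point $v\in\PP$ lies in the interior $\mathring{S_{c^F}}=\{u\in H_d:u_j<c^F_j,\ \forall j\}$ if and only if $v_j<c^F_j$ for every $j$, hence $v\notin\mathring{S_{c^F}}$ if and only if $c^F_i\le v_i$ for some $i$. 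Therefore $\mathring{S_{c^F}}\cap\PP=\emptyset$ holds precisely when, for every $v\in\PP$, there is an index $i$ with $c^F_i\le v_i$, which is exactly the condition obtained for $\Sigma(R(\PP))$. Since Lemma~\ref{lemma:face_caracterization} states that $F\in\TDD(\PP)$ if and only if $\mathring{S_{c^F}}\cap\PP=\emptyset$, we conclude $F\in\TDD(\PP)\iff F\in\Sigma(R(\PP))$, that is $\TDD(\PP)=\Sigma(R(\PP))$. (One should also note, separately, that $\emptyset$ and every singleton $\{x\}$ with $x\in\PP$ lie in $\TDD(\PP)$ so that the two complexes have the same vertex set $\PP$; this is in any case subsumed by the displayed equivalence.)

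Finally, since $R(\PP)$ is a $d$-representation on the set $\PP$ and $\TDD(\PP)=\Sigma(R(\PP))\subseteq\Sigma(R(\PP))$, Theorem~\ref{theorem:DM_representation} yields $\dim_{DM}(\TDD(\PP))\le d$; as $H_d$ is affinely isomorphic to $\RR^{d-1}$, this gives the stated bound. I do not expect a genuine obstacle: the argument is a dictionary translation between ``no point of $\PP$ in the open simplex $S_{c^F}$'' and ``every point of $\PP$ dominates $F$''. The only points requiring a little care are the direction of the defining inequalities ($u_i\le c_i$ for $S_c$, strict for its interior) and the use of $\PP\subset H_d$, which guarantees that no two points are comparable coordinatewise in the strict sense, so that the interior condition is governed coordinate by coordinate exactly as above.
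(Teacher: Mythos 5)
Your proof is correct and follows essentially the same route as the paper: both reduce membership in $\TDD(\PP)$ and in $\Sigma(R(\PP))$ to the condition that no point of $\PP$ satisfies $v_i<c^F_i$ for all $i$, via Lemma~\ref{lemma:face_caracterization} and the coordinatewise definition of the orders. Your version merely packages the two inclusions as a single chain of equivalences (handling $F=\emptyset$ separately, as one must since $c^F$ is undefined there), which is a tidy but not substantively different presentation.
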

\begin{proof}
Consider a set of points $\PP$ of $H_d$ in general position and let $R(\PP) = (\le_1,\ldots,\le_d)$ be the $d$-representation on $\PP$ such that $u\le_i v$ if and only if $u_i \le v_i$. 

Let us first prove that $\TDD(\PP) \subseteq \Sigma(R(\PP))$, by showing that for any $F\in \TDD(\PP)$ we have that $F\in \Sigma(R(\PP))$. By definition there exists $c \in \RR^{d}$ such that $S_c$ contains exactly the points $F$, and they lie on its border.  
For every $i$, we denote by $f_i$ the maximum among the elements of $F$ with respect to $\le_i$.  
Towards a contradiction we suppose that $F \not\in \Sigma(R(\PP))$. 
Thus there exists a vertex $z$ of $\PP$ such that $z$ does not dominate $F$ in any order of $R(\PP)$. Thus $z <_i f_i$ for every $i$. 
Therefore $z_i < (f_i)_i$ for every $i$. 
But $(f_i)_i \le c_i$ because $f_i \in S_c$ thus $z_i < c_i$.
Hence $z \in \mathring{S_c}$ contradicting the definition of $S_c$.  
%Therefore $F \in \Sigma(R(\PP))$ and $\TDD(\PP) \subseteq \Sigma(R(\PP))$.

Let us now prove that $\Sigma(R(\PP)) \subseteq \TDD(\PP)$, by showing that for any $F\in \Sigma(R(\PP))$ we have that $F\in \TDD(\PP)$. Consider any non-empty face $F \in \Sigma(R)$ (the case of the empty face is trivial) and suppose towards a contradiction that $F\not\in \TDD(\PP)$. According to Lemma~\ref{lemma:face_caracterization}, there exists $x \in \PP$ such that $x \in \mathring{S}_{c^F}$. Thus $x_i < c^F_i$ for every $i$. For every $i$, we define $f_i$ as the maximum among the elements of $F$ with respect to $\le_i$. Thus $x_i < c^F_i = (f_i)_i$, and $x <_i f_i$ for every $i$, which contradicts the fact that $F \in \Sigma(R)$.
%Therefore $F \in \TDD(\PP)$ and $\Sigma(R(\PP)) \subseteq \TDD(\PP)$.
%This concludes the proof of the Theorem.
\end{proof}

The reciprocal statement holds for $d=2$
and 3, as any $d$-representation $R$ is such that $\Sigma(R)$ defines a TD-Delaunay complex of $H_d$. This naturally raised the following conjecture~\cite{Mary} (See also in~\cite{EFKU} as an open problem). 

\begin{conjecture}[\cite{Mary}]\label{conj}
For every $d$-representation $R$, the abstract simplicial complex
$\Sigma(R)$ is a TD-Delaunay complex of $H_d\simeq \RR^{d-1}$.
\end{conjecture}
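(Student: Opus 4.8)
The plan is to realize $\Sigma(R)$ as a TD-Delaunay complex by producing an explicit point set. By Theorem~\ref{thm:TDDisDMd}, for any point set $\PP \subset H_d$ in general position one has $\TDD(\PP) = \Sigma(R(\PP))$, where $R(\PP) = (\le_1, \ldots, \le_d)$ is the representation in which $u \le_i v$ iff $u_i \le v_i$. Hence, to prove the conjecture for a given $d$-representation $R = (\le_1,\ldots,\le_d)$ on $V$, it suffices to assign to each vertex $v \in V$ a point $p(v) \in H_d$ so that the representation induced by the coordinates produces the same complex. The cleanest sufficient condition is to realize each order exactly: find real values $x_i(v)$ (for $v \in V$, $i \in \II1d$) such that, for each $i$, the map $v \mapsto x_i(v)$ is order-isomorphic to $(V, \le_i)$, such that $\sum_{i=1}^d x_i(v) = 1$ for every $v$ (so that $p(v) = (x_1(v), \ldots, x_d(v)) \in H_d$), and such that the $x_i$ are pairwise distinct in each coordinate (general position). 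If the coordinate orders coincide with the $\le_i$, then the induced representation equals $R$ and $\TDD(p(V)) = \Sigma(R)$ follows immediately.

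This reduces the conjecture to a linear feasibility question: is the system consisting of the strict inequalities $x_i(u) < x_i(v)$ for each consecutive pair $u <_i v$ in each order, together with the hyperplane equalities $\sum_{i=1}^d x_i(v) = 1$, always solvable? I would first simplify $R$ before writing this system down. Applying Lemma~\ref{lemma:increasing_path} repeatedly, I would replace $R$ by an equivalent representation with the same complex in which each order is \emph{path-induced}, i.e.\ $x \le_i y$ holds precisely when there is an $(\le_i)$-increasing $xy$-path in $\Sigma(R)$. In this normalized form the order relations that really matter are those forced by edges of $\Sigma(R)$: by Lemma~\ref{lemma:repre_path} the top vertex of a face reaches every vertex along some increasing path, so the combinatorics of domination is controlled entirely by oriented edges. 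This should let me restrict the system to the inequalities $x_i(a) < x_i(b)$ coming from edges $ab \in \Sigma(R)$ oriented by $\le_i$, rather than all $\binom{|V|}{2}$ comparisons.

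With the system in hand, the strategy is to prove feasibility by Farkas' lemma: the system is solvable unless some nonnegative combination of the edge-inequalities, after accounting for the equalities $\sum_i x_i(v) = 1$, telescopes to a contradiction. Because the equalities couple the $d$ coordinates of each vertex, such an obstruction is naturally encoded as a collection of weighted oriented walks, one per coordinate direction, whose net contribution cancels at every vertex; this is exactly the multi-flow formalism. The proof would therefore amount to exhibiting, for every normalized $R$, a consistent fractional assignment of coordinate gaps — equivalently, showing that no obstructing multi-flow exists — and then reading off explicit values $x_i(v)$, checking general position by a small perturbation.

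The main obstacle is precisely this feasibility step. For $d = 2, 3$ the edge orientations induced by the $\le_i$ on $\Sigma(R)$ are rigid enough (Schnyder-wood-like) that the single linear relation $\sum_i x_i(v) = 1$ can always absorb the forced gaps, so the system is feasible. For larger $d$ the interplay between $d$ simultaneous order constraints and one summation constraint is delicate: one must rule out a cyclic pattern of forced inequalities whose total gap the hyperplane cannot accommodate. Proving that no such obstructing multi-flow can ever arise is the crux; it is here that I would expect the real difficulty to lie, and — should the system turn out to be infeasible for some $R$ — it is exactly here that an obstruction to the conjecture would surface.
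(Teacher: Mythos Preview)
Your proposal is not a proof but an outline that stops precisely where the argument must fail: the statement is false, and the paper \emph{disproves} it (Theorem~\ref{thm:main}) rather than proving it. Your framework is nonetheless exactly the paper's. The edge-inequality system you write down is the TD-Delaunay system, and Theorem~\ref{thm:delaunay_ineq_syst} confirms that its feasibility (for some $R$ with $\Sigma(R)=\Delta$) characterizes $\Delta$ being a TD-Delaunay complex; your Farkas dual is the multi-flow of Proposition~\ref{proposition:alternative}.

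The gap is your final paragraph: you correctly isolate ``no obstructing multi-flow exists'' as the crux but give no argument for it --- and none can be given. For the explicit $4$-representation on eight vertices in Theorem~\ref{thm:main}, the paper exhibits a non-zero multi-flow (Figure~\ref{fig:cex_multi_flow}), so the system is infeasible. There is also a subtlety your outline skips: one must rule out feasibility for \emph{every} representation $R'$ with $\Sigma(R')=\Delta$, not merely for the given $R$ after normalization via Lemma~\ref{lemma:increasing_path} (that lemma only rearranges non-edge comparisons and hence leaves the TD-Delaunay system unchanged). The paper handles this via Lemma~\ref{lem:standard}, showing that all such $R'$ induce the same arc orientations on $\{e,f,g,h\}$, so the same multi-flow obstructs every one of them. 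Your closing caveat (``should the system turn out to be infeasible\ldots'') is exactly what happens.
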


Mary~\cite{Mary} proved the conjecture when $\Sigma(R)$
already admits some particular embedding. In the following we show
that actually this conjecture does not hold, already for $d=4$.
To do so, in the following we characterize which representations $R$ are such that $\Sigma(R)$ is a TD-Delaunay complex. Actually we are first going to characterize which $d$-representations $R$ correspond to some point set $\PP$ of $H_d$ such that $R=R(\PP)$, and then we are going to show that for those we have $\TDD(\PP) = \Sigma(R(\PP))$.
By definition of $R(\PP)$, for any different $x,y \in \PP$ and for any order $\le_i\in R(\PP)$ we have that $x\le_i y$, if and only if 
\begin{equation}~\label{eq:xyi}
y_i - x_i > 0
\end{equation}
Furthermore as we consider a point set $\PP$ of $H_d$ we have that $\sum_{1\le i\le d} x_i =1$, which gives:
\begin{equation}~\label{eq:Hd}
x_d = 1-\sum_{1\le i< d} x_i
\end{equation}
In the following we consider a $d$-representation $R$, and we define the system of inequalities obtained by taking Inequality~(\ref{eq:xyi}) for every $i\in \II1d$ but only for the pairs $\{x,y\}\in \Sigma(R)$, and by replacing the $d^{th}$ coordinates by the right hand of Equation~(\ref{eq:Hd}).

\begin{definition}[TD-Delaunay system]
    	Let $R$ be a $d$-representation on a vertex set $V$, and consider the edge set $E$ of $\Sigma(R)$ defined by $E = \{ X \in \Sigma(R) : |X| = 2\}$.
    	We define the matrix $A_R$ of $\MM_{ E \times \II1d , V \times \II1{d-1}}(\RR)$
    	where the coefficients, $a_{(e,i),(v,j)}$ of $A_R$ are indexed by an edge $e\in E$, a vertex $v\in V$, and two indices $i \in \II1d$ and $j \in \II1{d-1}$.
    \[
    a_{(e=\{x,y\},i),(v,j)} = 
    \begin{cases}
    +1 &\text{ if $i=j$, $v \in e$ and $v = \max_{\le_j} (x,y)$} \\
    -1 & \text{ if $i=j$, $v \in e$ and  $v = \min_{\le_j} (x,y)$}  \\
    +1 &\text{ if $i = d$, $v \in e$ and  $v = \min_{\le_d} (x,y)$}  \\
    -1 &\text{ if $i = d$, $v \in e$ and  $v = \max_{\le_d} (x,y)$}  \\
    0 &\text{ otherwise}
    \end{cases}
    \]
    The \mmdef{TD-Delaunay system} of the representation $R$ is the following linear system of inequalities :
	\[
	 A_R X > 0 
	 \]
	for some vector $X \in \RR^{V \times \II1{d-1}}$.
\end{definition}

\begin{example}
	We consider the following $3$-representation $R$ on $V=\{a,b,c\}$:
	
	$$ \begin{array}{c|cccc} \le_1 & b & c & a \\ \le_2 & a & c & b \\ \le_3 & a & b & c
	   \end{array}$$
	The complex $\Sigma(R)$ is given by the facet $\{a,b,c\}$ and contains $3$ edges: $ab, bc$ and $ac$.
	The matrix of the TD-Delaunay system of $R$ is:
    \[
	A_R = \bordermatrix{
	& (a,1) & (b,1) & (c,1) & (a,2) & (b,2) & (c,2) \cr
	(bc,1) & & -1 & 1 \cr
	(ac,1) & 1 & & -1 \cr
	(ab,1) & 1 & -1 & \cr
	(bc,2) & &&& & 1 & -1 \cr
	(ac,2) & &&& -1 & & 1 \cr
	(ab,2) & &&& -1 & 1 & \cr
	(bc,3) & & 1 & -1 & & 1 & -1 \cr
	(ac,3) & 1 & & -1& 1 & & -1 \cr
	(ab,3) & 1 & -1  & & 1 & -1 & \cr
	}
	\]
	The system $A_R X > 0$ where $X \in \RR^{V \times \II1{d-1}}$ is equivalent to the following linear system, where $v_i$ denotes $X_{(v,i)}$. 
	\[
	\begin{cases}
		b_1 < c_1 \\
		c_1 < a_1 \\
		b_1 < a_1 \\
		c_2 < b_2 \\
		a_2 < c_2 \\
		a_2 < b_2 \\
		c_1 + c_2 < b_1 + b_2 \\
		c_1 + c_2 < a_1 + a_2 \\
		b_1 + b_2 < a_1 + a_2 
	\end{cases}
	\]
	Note that setting $a_3$, $b_3$, and $c_3$ to $1-a_1-a_2$, $1-b_1-b_2$, and $1-c_1-c_2$ respectively, the last three equations imply that $b_3< c_3$, $a_3< c_3$ and $a_3< b_3$.
\end{example}

\begin{theorem}\label{thm:delaunay_ineq_syst}
For any abstract simplicial complex $\Delta$ with vertex set $V$, $\Delta$ is a TD-Delaunay complex of $H_d \simeq \RR^{d-1}$ if and only if there
exists a $d$-representation $R$ on $V$ such that $\Delta = \Sigma(R)$
and such that the corresponding TD-Delaunay system has a solution.
\end{theorem}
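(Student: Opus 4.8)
The plan is to prove the two implications separately, leaning on Theorem~\ref{thm:TDDisDMd} for the ``only if'' part and on the path lemmas, Lemmas~\ref{lemma:repre_path} and~\ref{lemma:path_to_nonface}, for the converse.

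For the ``only if'' direction, suppose $\Delta = \TDD(\PP)$ for a point set $\PP\subseteq H_d$ in general position. Theorem~\ref{thm:TDDisDMd} already gives $\Delta = \Sigma(R(\PP))$, so, setting $R := R(\PP)$, it remains only to exhibit a solution of the TD-Delaunay system of $R$. I would take $X_{(v,j)} := v_j$ for every $v\in\PP$ and $j\in\II1{d-1}$, and verify $A_R X > 0$ row by row: for an edge $\{x,y\}$ and $i<d$ the corresponding coordinate of $A_R X$ equals $|x_i-y_i|>0$ by general position, while for $i=d$ it equals $M_d-m_d>0$, where $m$ and $M$ denote the $\le_d$-minimum and $\le_d$-maximum of $\{x,y\}$, using that each point of $H_d$ has coordinate sum $1$. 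This is the routine computation already illustrated by the Example.

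For the converse, suppose $\Delta = \Sigma(R)$ for a $d$-representation $R$ whose TD-Delaunay system admits a solution. The set $\{X : A_R X>0\}$ is a non-empty open convex cone, hence is not covered by the finitely many hyperplanes $\{X_{(u,i)}=X_{(v,i)}\}$ ($i<d$, $u\ne v$) and $\{\sum_{j<d}(X_{(u,j)}-X_{(v,j)})=0\}$ ($u\ne v$); I would pick a solution $X$ in the complement of their union and set $p^v := (X_{(v,1)},\dots,X_{(v,d-1)},\,1-\sum_{j<d}X_{(v,j)})\in H_d$ and $\PP := \{p^v : v\in V\}$. By the choice of $X$ the map $v\mapsto p^v$ is injective and $\PP$ is in general position. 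The crucial observation is that $R$ and $R(\PP)$ induce the same order on the two endpoints of every edge of $\Sigma(R)$ --- this is exactly what the edge-rows of $A_R X>0$ assert, by the computation above --- so every $(\le_i)$-increasing path of $\Sigma(R)$ is still increasing for the $i$-th order of $R(\PP)$; in particular, for every $F\in\Sigma(R)$ one has $c^F_i = \max_{w\in F}p^w_i = p^{f_i}_i$, where $f_i$ is the $\le_i$-maximum of $F$. It then suffices to prove $\TDD(\PP)=\Sigma(R)=\Delta$. For $\Sigma(R)\subseteq\TDD(\PP)$: if $F\in\Sigma(R)$ but $p^v\in\mathring{S}_{c^F}$ for some $v\in V$, then $p^v_i<p^{f_i}_i$ hence $v\ne f_i$ for all $i$; if $v\in F$ this says that $v$ dominates $F$ in no order of $R$, contradicting $F\in\Sigma(R)$, and if $v\notin F$ then Lemma~\ref{lemma:repre_path} (with $x:=v$) provides an $(\le_i)$-increasing $f_iv$-path in $\Sigma(R)$ for some $i$, whence $p^{f_i}_i\le p^v_i$ by the crucial observation --- a contradiction; so $F\in\TDD(\PP)$ by Lemma~\ref{lemma:face_caracterization}. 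For $\TDD(\PP)\subseteq\Sigma(R)$: if $F\in\TDD(\PP)$ but $F\notin\Sigma(R)$, Lemma~\ref{lemma:path_to_nonface} provides $v\in V$ that dominates $F$ in no order of $R$, together with $(\le_i)$-increasing $vf^i$-paths in $\Sigma(R)$ with $f^i\in F$ for every $i$; since $v$ dominates $F$ in no order, $v$ differs for each $i$ from the $\le_i$-maximum $g_i$ of $F$, so when $v\in F$ the edge $\{v,g_i\}$ yields $p^v_i<p^{g_i}_i\le c^F_i$, while when $v\notin F$ the path yields $p^v_i\le p^{f^i}_i$ and $v\ne f^i$ yields $p^v_i\ne p^{f^i}_i$, so again $p^v_i<c^F_i$; hence $p^v\in\mathring{S}_{c^F}$ for all $i$, contradicting Lemma~\ref{lemma:face_caracterization}.

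The genuine difficulty lies entirely in the converse direction, and it is the gap between the combinatorial order $R$ and the geometric order $R(\PP)$: since $A_R$ constrains only endpoints of edges, the two orders can disagree on non-adjacent pairs, so one cannot directly conclude $\Sigma(R(\PP))=\Sigma(R)$. The technical lemmas --- in particular Lemmas~\ref{lemma:repre_path} and~\ref{lemma:path_to_nonface}, themselves distilled from the inversion Lemma~\ref{lemma:repre_inversion} and the increasing-path Lemma~\ref{lemma:increasing_path} --- are precisely the tool that replaces the missing order information by the existence of monotone paths through faces, and such paths do survive the passage to $R(\PP)$. The preliminary perturbation that extracts a general-position solution from the open solution cone is elementary but must not be skipped, since $\TDD$ is defined only for point sets in general position.
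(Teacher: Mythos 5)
Your proof follows essentially the same route as the paper's: the forward direction is Theorem~\ref{thm:TDDisDMd} together with the observation that the coordinates of $\PP$ solve the system of $R(\PP)$, and the converse builds $\PP$ from a solution, records that the orders of $R$ and of $R(\PP)$ agree on the endpoints of every edge of $\Sigma(R)$, and derives the two inclusions from Lemmas~\ref{lemma:repre_path} and~\ref{lemma:path_to_nonface} exactly as in the paper (picking a generic point of the open solution cone instead of perturbing is an equivalent way to reach general position, and routing through Lemma~\ref{lemma:face_caracterization} instead of building $S_c$ by hand is the same computation). One caveat: in the inclusion $\TDD(\PP)\subseteq\Sigma(R)$, when the non-dominating vertex $v$ given by Lemma~\ref{lemma:path_to_nonface} lies in $F$ you invoke ``the edge $\{v,g_i\}$'', but at that stage $\{v,g_i\}$ is only known to belong to $\TDD(\PP)$, not to $\Sigma(R)$, so the inequality $p^v_i<p^{g_i}_i$ is not yet available --- this is the same delicate point the paper's own proof glosses over (``one of these inequalities is strict''), and it can be repaired, for instance by proving this inclusion by induction on $|F|$ so that all pairs of $F$ are already known to be edges of $\Sigma(R)$.
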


\begin{proof}($\Rightarrow$)
This follows from Theorem~\ref{thm:TDDisDMd} and from the fact that the coordinates of any point set $\PP$ form a solution to the TD-Delaunay system defined by $R(\PP)$.
	
($\Leftarrow$) Consider now a $d$-representation $R$ on a set $V$ such that the TD-Delaunay system of $R$ has a solution $X \in \RR^{V \times \II1{d-1}}$ and for any $v\in V$ let us define a point $v\in\RR^d$ by setting $v_i = X_{(v,i)}$ for $i\in \II1{d-1}$, and $v_d = 1-v_1-\ldots -v_{d-1}$. 
This implies that all these points belong to $H_d$. 
It will be clear from the context when we refer to an element of $V$ or to the corresponding point of $\RR^d$.
As in the linear system the inequalities are strict, one can slightly perturb the position of the vertices in order to obtain points in general position that still fulfill the system constraints.
Recall that by construction, for any edge $uv$ of $\Sigma(R)$ and any $i \in \II1d$, $u_i < v_i$ if and only if $u <_i v$. This implies that if $\Sigma(R)$ has an $(\le_i)$-increasing $xy$-path then $x_i \le y_i$.

Let us first prove that $\TDD(\PP) \subseteq \Sigma(R)$.
Consider a face $F \in \TDD(\PP)$ and suppose towards a contradiction that $F \not\in \Sigma(R)$. 
As $F \in \TDD(\PP)$, there exists $c\in \RR^{d}$ such that $S_c$ contains exactly the points $F$, and they lie on its border. As $F \not\in \Sigma(R)$, then by Lemma~\ref{lemma:path_to_nonface}, there exists $x\in V$ which does not dominate $F$, and such that for every $\le_i \in R$ there exists an $(\le_i)$-increasing $xf^i$-path for some vertex $f^i\in F$. Therefore $x_i \le (f^i)_i \le c_i$. As $x$ does not dominate $F$ and as the points are in general position one of these inequalities is strict and we conclude that $x$ lies in the interior of $S_c$, a contradiction.

Let us now prove that $\Sigma(R) \subseteq \TDD(\PP)$. Consider any non-empty face $F \in \Sigma(R)$. For every $i \in \II1d$, we denote $f_i$ the maximum of $F$ in the order $\le_i$ and we define $c\in \RR^d$ (and $S_c$) by setting $c_i = (f_i)_i$.  
First note that for any vertex $u \in F$ and any $i \in \II1d$, as $uf_i$ is an edge, we have that $c_i = (f_i)_i \ge u_i$. We hence have that $\sum_{i=1}^{d} c_i \ge \sum_{i=1}^{d} u_i = 1$. 
As $F \in \Sigma(R)$, for every $u \in F$ there exists an $i \in \II1d$ such that $u = f_i$.
Therefore $u_i = (f_i)_i = c_i$ and as $u_j \le c_j$ for every $j$ (because either $u=f_j$ or $\{u,f_j\} \in \Sigma(R)$ and then $u_j \le (f_j)_j = c_j$), we have that $u$ is on the border of $S$.  
According to Lemma~\ref{lemma:repre_path}, for every $u \not\in F$, there exists an $(\le_i)$-increasing $f_iu$-path in $\Sigma(R)$, for some order $\le_i \in R$. Therefore $c_i = (f_i)_i < u_i$ and $u \not\in S_c$.  
Thus $F \in \TDD(\PP)$ and $\Sigma(R) \subseteq \TDD(\PP)$.
%
%This concludes the proof of the theorem.
\end{proof}

\section{Multi-flows}

We disprove Conjecture~\ref{conj} using  Theorem~\ref{thm:delaunay_ineq_syst} by exhibiting a simplicial complex $\Delta$, such that  $\Delta = \Sigma(R)$ for some 4-representation $R$, and such that none of the 4-representation $R'$ verifying $\Delta = \Sigma(R')$ admits a solution to its TD-Delaunay system. A common tool to prove that a system of inequalities has no solution is the celebrated Farkas lemma.

\begin{lemma}[Farkas lemma]\label{lemma:farkas}
    For any $ m \times n $ real matrix $A$, either
    \begin{itemize}
     \item $Ax > 0$ admits a solution $x\in \RR^m$, or
     \item ${}^t A y = 0$ admits a non-zero solution $y\in (\RR^+)^n$.
    \end{itemize}
    Furthermore both cases are exclusive.
\end{lemma}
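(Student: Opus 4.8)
The plan is to prove this as the ``strict homogeneous'' form of Farkas' lemma (sometimes attributed to Gordan), directly from elementary convexity. Write $a_1,\dots,a_m$ for the rows of $A$, regarded as vectors of the space $\RR^n$ they live in; then ``$Ax>0$ is solvable'' means that for some $x$ one has $\langle a_k,x\rangle>0$ for all $k$, while ``${}^tAy=0$ has a non-zero solution $y\in(\RR^+)^m$'' means $\sum_k y_k a_k=0$ for some nonnegative $y\neq 0$, i.e.\ the rows of $A$ admit a nontrivial nonnegative linear dependence.

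First I would dispose of the ``Furthermore'' clause. If $x$ satisfies $Ax>0$ and $y\ge 0$, $y\neq 0$ satisfies $\sum_k y_k a_k=0$, then pairing with $x$ gives $0=\big\langle \sum_k y_k a_k,\,x\big\rangle=\sum_k y_k\langle a_k,x\rangle$, which is strictly positive since every summand is nonnegative and at least one index has $y_k>0$ while $\langle a_k,x\rangle>0$. This contradiction shows the two alternatives cannot hold simultaneously.

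For the existence of at least one alternative, I would assume the second one fails and construct a solution of $Ax>0$. Let $S=\mathsf{Conv}(a_1,\dots,a_m)$; as the convex hull of finitely many points it is compact and convex. The failure of the second alternative is exactly the statement that $0\notin S$, since a representation $0=\sum_k t_k a_k$ with $t_k\ge 0$ and $\sum_k t_k=1$ would be a nontrivial nonnegative dependence among the rows. Let $p\in S$ be a point of minimum Euclidean norm, which exists by compactness; since $0\notin S$ we have $p\neq 0$. For any $s\in S$ and $\lambda\in(0,1]$ the point $p+\lambda(s-p)$ lies in $S$, so $\|p+\lambda(s-p)\|^2\ge\|p\|^2$; expanding, dividing by $\lambda$, and letting $\lambda\to 0^+$ yields $\langle p,\,s-p\rangle\ge 0$, hence $\langle s,p\rangle\ge\|p\|^2>0$ for every $s\in S$. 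Setting $x=p$ and applying this to $s=a_k\in S$ gives $\langle a_k,x\rangle\ge\|p\|^2>0$ for all $k$, i.e.\ $Ax>0$, as required.

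I do not expect a genuine obstacle here. The one point that needs care is translating ``the second alternative is unsolvable'' precisely into ``$0\notin\mathsf{Conv}(\text{rows of }A)$'', and noting that the \emph{finiteness} of the set of rows is exactly what makes this convex hull compact, so that a minimum-norm point exists (equivalently, one may invoke the separating hyperplane theorem for the point $0$ and the compact convex set $S$). The degenerate case $A=0$ is consistent with the statement: then $S=\{0\}$, the second alternative holds (any nonzero $y$ works) and the first fails. One could instead deduce the lemma from the inhomogeneous Farkas lemma or from linear-programming duality, but the minimum-norm-point argument sketched above is fully self-contained.
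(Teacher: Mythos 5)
Your proof is correct and self-contained; note, though, that the paper offers no proof of this lemma at all --- it is quoted as a known result (this variant is usually called Gordan's theorem) and used as a black box in Proposition~\ref{proposition:alternative}. Your argument is the standard one: exclusivity by pairing a strict solution $x$ against a nonnegative dependence $y$, and existence by observing that unsolvability of ${}^tAy=0$ over nonzero nonnegative $y$ is exactly $0\notin\mathsf{Conv}(a_1,\dots,a_m)$, then separating $0$ from this compact convex set via its minimum-norm point $p$ and taking $x=p$. All steps check out, including the limit $\lambda\to 0^+$ giving $\langle s,p\rangle\ge\|p\|^2>0$ and the degenerate case $A=0$. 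One small point worth flagging: the paper's statement is dimensionally inconsistent as written (for an $m\times n$ matrix it should be $x\in\RR^n$ and $y\in(\RR^+)^m$, matching the paper's own later usage where $x$ is indexed by columns $V\times\II1{d-1}$ and $y$ by rows $E\times\II1d$); you silently adopt the corrected convention, writing $y\in(\RR^+)^m$ for the rows $a_1,\dots,a_m\in\RR^n$, which is the right reading.
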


In the following we show that this lemma defines a dual notion of a TD-Delaunay solution, we call it a {\it multi-flow}. To define it, we first need to recall some notions about flows.
Let $G=(V,A)$ be a digraph with vertex set $V$ and arc set $A \subseteq V \times V$.
A flow on $G$ is a function of $\varphi : A \to \RR^+$.  
Let the divergence $\mathsf{div}_\varphi(x)$ of a vertex $x$ be given by $\mathsf{div}_\varphi(x) = \sum_{(y,x) \in A} \varphi(y,x) - \sum_{(x,y) \in A} \varphi(x,y)$.

\begin{definition}[Multi-flow]
	Let $R=\{\le_1,\ldots,\le_d\}$ be a $d$-representation on $V$.
	For $i \in \II1d$, $G^i(R)$ will denote the digraph with vertex set $V$ and arc set $A^i$ where $A^i = \{ (x,y) \in V\times V : \{x,y\} \in \Sigma(R) , x  \le_i y \}$.
	A \mmdef{multi-flow} is a collection of $d$ flows $\varphi_1, \ldots, \varphi_d$ respectively on each digraph $G^1(R), \ldots , G^d(R)$ such that $\DD_{\varphi_i}(v) = \DD_{\varphi_d}(v)$, for every $v\in V$ and every $i \in \II1d$.
%	\[
%	 \DD_{\varphi_i}(v) = \DD_{\varphi_d}(v)
%	\]
\end{definition}

We can now state the main result of this section.

\begin{proposition}
	\label{proposition:alternative}
	Let $R$ be a $d$-representation on a set $V$.
	Then either
	\begin{itemize}
	 \item the corresponding TD-Delaunay  system admits a solution, or
	 \item $R$ admits a non-zero multi-flow (i.e. a multi-flow with
           some $\varphi_i(uv)>0$).
	\end{itemize}

	Furthermore both cases are exclusive.
\end{proposition}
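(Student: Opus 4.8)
The plan is to apply the Farkas lemma (Lemma~\ref{lemma:farkas}) to the matrix $A_R$ of the TD-Delaunay system and to reinterpret the alternative ``${}^t A_R y = 0$ has a non-zero solution $y\ge 0$'' as the existence of a non-zero multi-flow. Since the two cases of the Farkas lemma are mutually exclusive, the exclusivity part of the proposition will be immediate; the real content is the translation between the kernel of ${}^t A_R$ and multi-flows. First I would set up the dictionary: a candidate vector $y\in(\RR^+)^{E\times\II1d}$ is indexed by pairs $(e,i)$ with $e\in E$ an edge of $\Sigma(R)$ and $i\in\II1d$, so I would define $\varphi_i$ on the digraph $G^i(R)$ by $\varphi_i(x,y) = y_{(\{x,y\},i)}$ for the arc $(x,y)$ with $x\le_i y$. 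This is exactly a flow in the sense defined above, and conversely every collection of $d$ flows arises this way. So the correspondence $y \leftrightarrow (\varphi_1,\dots,\varphi_d)$ is a bijection between non-negative vectors indexed by $E\times\II1d$ and collections of $d$ flows on the $G^i(R)$.

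Next I would compute ${}^t A_R y$ coordinate by coordinate. A coordinate of ${}^t A_R y$ is indexed by $(v,j)$ with $v\in V$ and $j\in\II1{d-1}$, and equals $\sum_{(e,i)} a_{(e,i),(v,j)}\, y_{(e,i)}$. Reading off the case definition of $a_{(e,i),(v,j)}$, the only contributions come from $i=j$ and from $i=d$. For a fixed edge $e=\{x,y\}$ and index $i=j$, the coefficient is $+1$ if $v$ is the $\le_j$-maximum of $e$ and $-1$ if $v$ is the $\le_j$-minimum; summing $y_{(e,j)}$ over all such edges $e$ incident to $v$ gives precisely $\sum_{(u,v)\in A^j}\varphi_j(u,v) - \sum_{(v,u)\in A^j}\varphi_j(v,u) = \DD_{\varphi_j}(v)$. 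For $i=d$ the signs are reversed relative to the $\le_d$-order, so the contribution is $-\bigl(\sum_{(u,v)\in A^d}\varphi_d(u,v) - \sum_{(v,u)\in A^d}\varphi_d(v,u)\bigr) = -\DD_{\varphi_d}(v)$. Hence the $(v,j)$-coordinate of ${}^t A_R y$ is exactly $\DD_{\varphi_j}(v) - \DD_{\varphi_d}(v)$. Therefore ${}^t A_R y = 0$ holds if and only if $\DD_{\varphi_j}(v) = \DD_{\varphi_d}(v)$ for every $v\in V$ and every $j\in\II1{d-1}$, which (the case $j=d$ being trivial) is precisely the definition of a multi-flow. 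A non-zero $y$ corresponds to a non-zero multi-flow, i.e.\ one with some $\varphi_i(uv)>0$.

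Finally I would assemble the pieces: by Lemma~\ref{lemma:farkas} applied to $A=A_R$, either $A_R X > 0$ has a solution $X$ — which is by definition a solution of the TD-Delaunay system — or ${}^t A_R y = 0$ has a non-zero solution $y\in(\RR^+)^{E\times\II1d}$ — which by the computation above is a non-zero multi-flow of $R$ — and these alternatives are exclusive. The main point requiring care, and the step I expect to be most delicate, is the bookkeeping in the coefficient computation: one must check that every edge $e$ incident to $v$ contributes with the correct sign according to whether $v$ is the $\le_j$-top or $\le_j$-bottom of $e$, and in particular that the $i=d$ block, where the roles of max and min in the order $\le_d$ are swapped, produces exactly $-\DD_{\varphi_d}(v)$ and not $+\DD_{\varphi_d}(v)$; getting this sign right is what makes the condition come out as $\DD_{\varphi_j}=\DD_{\varphi_d}$ rather than $\DD_{\varphi_j}+\DD_{\varphi_d}=0$. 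There is no genuine obstacle beyond this careful matching of indices and signs.
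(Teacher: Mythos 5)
Your proposal is correct and follows essentially the same route as the paper: apply Farkas' lemma to $A_R$ and verify, via the sign conventions in the definition of $A_R$, that the $(v,j)$-coordinate of ${}^t A_R y$ equals $\DD_{\varphi_j}(v)-\DD_{\varphi_d}(v)$, so that non-zero non-negative kernel vectors of ${}^t A_R$ correspond exactly to non-zero multi-flows. The sign check you flag as the delicate point is carried out the same way in the paper, and your bijection remark covers both directions of the equivalence that the paper writes out separately.
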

\begin{proof}
	According to Proposition~\ref{thm:delaunay_ineq_syst}, the corresponding TD-Delaunay system admits a solution if and only if $A_R x > 0$ admits a solution $x \in \RR^{V \times \II1{d-1}}$.	
	According to Farkas lemma, it thus remains to show that ${}^t A_R y = 0$ admits a non-zero solution $y \in (\RR^+)^{E \times \II1d}$ if and only if  $R$ admits a non-zero multi-flow.
	
	Suppose that the system ${}^t A_R y = 0$ admits a non-zero solution $y \in (\RR^+)^{E \times \II1d}$.
	%We denote $y = (y^1, \ldots , y^{d-1} )$ the concatenation of the vectors $y^k = (y_1^k, \ldots , y_m^k) \in \RR^m$ for every $k \in \II1{d-1}$.
	Then $y_{(e,i)} \ge 0$ for all $i \in \II1d$ and all $e \in E$.
	For any $i \in \II1d$, we define the flow $\varphi_i$ on $G^i(R)$ by setting $\varphi_i(a) = y_{(e,i)}$ for every $a \in A^i$, where $e$ is the edge of $E$ corresponding to the arc $a$.
	Since ${}^t A_R y = 0$, the following equations hold for every $v\in V$ and every $j \in \II1{d-1}$, 
	\begin{eqnarray*}
	    \sum_{ (e,i) \in E \times \II1d} a_{(e,i),(v,j)} y_{(e,i)} & = & 0\\
	    \sum_{ (e,i) \in E \times \II1{d-1}} a_{(e,i),(v,j)} y_{(e,i)} & = & - \sum_{ (e,i) \in E \times \{d\}} a_{(e,i),(v,j)} y_{(e,i)}\\
	\end{eqnarray*}
	Since $a_{(e,i),(v,j)} = 0$ whenever $i\neq j$ and $i\neq d$,
	\begin{eqnarray*}
        \sum_{e \in E} a_{(e,j),(v,j)} y_{(e,j)} & = & - \sum_{e \in E} a_{(e,d),(v,j)} y_{(e,d)}\\
	\end{eqnarray*}
    By definition of $A_R$,
	\begin{eqnarray*}
        \sum_{\substack{e=\{u,v\} \in E\\ \text{s.t.}\; u\le_j v}} y_{(e,j)} - \sum_{\substack{e=\{u,v\} \in E\\ \text{s.t.}\; v\le_j u}} y_{(e,j)} & = & \sum_{\substack{e=\{u,v\} \in E\\ \text{s.t.}\; u\le_d v}} y_{(e,d)} - \sum_{\substack{e=\{u,v\} \in E\\ \text{s.t.}\; v\le_d u}} y_{(e,d)}\\
	\end{eqnarray*}
    Finally by definition of $\varphi_i$, 
	\begin{eqnarray*}
        \sum_{(u,v) \in A^j} \varphi_j(u,v) - \sum_{(v,u) \in A^j} \varphi_j(v,u) & = & \sum_{(u,v) \in A^d} \varphi_d(u,v) - \sum_{(v,u) \in A^d} \varphi_d(v,u)\\
	    \DD_{\varphi_j}(v) & = & \DD_{\varphi_d}(v)
	\end{eqnarray*}
	%But $\sum_{ (e,i) \in E \times \II1{d-1}} y_{(e,i)} a_{(e,i),(v,j)} = \sum_{e \in E} y_{(e,j)} a_{(e,j),(v,j)} = \DD{}_{\varphi_j}(v)$.
	%And $\sum_{ (e,i) \in E \times \{d\}}  y_{(e,i)} a_{(e,i),(v,j)} = \sum_{e \in E} y_{(e,d)} a_{(e,d),(v,j)} = -\DD_{\varphi_d}(v)$.
	%Thus for every $i \in \II1{d-1}$ , $.
	We conclude that $(\varphi_1, \ldots , \varphi_d)$ is a non-zero multi-flow of $R$.
	
	To prove the converse statement suppose that $R$ admits a non-zero multi-flow $(\varphi_1, \ldots , \varphi_d)$.
	We define $y \in \RR^{E \times \II1d}$ by setting $y_{(e,i)}$ by $y_{(e,i)} = \varphi_i(a)$, for any $e \in E$ and $i \in \II1d$, and where $a$ is the arc of $G^i(R)$ corresponding to the edge $e$.
	Clearly $y \in (\RR^+)^{E \times \II1d}$, and the multi-flow being non-zero, $y$ is non-zero. As
	$\DD_{\varphi_j}(v) = \DD_{\varphi_d}(v)$ for every $v\in V$ and every $j\in\II1{d-1}$, one can deduce (by reversing the above calculus) that ${}^t A_R y = 0$. We thus have that $y$ is a non-zero solution to ${}^t A_R y = 0$.
\end{proof}

\section{A counter-example to Conjecture~\ref{conj}}

Before showing our counter-example to Conjecture~\ref{conj} we need the following definition.
\begin{definition}
A $d$-representation $R$ on a set $V$ with at least $d$ elements, is {\it standard} if every element $v\in V$ is a vertex of $\Sigma(R)$, and if for any order, its maximal vertex is among the $d-1$ smallest elements in all the other orders of $R$.
\end{definition}
Let us recall that in a simplicial complex, a {\it $k$-face} $F$ is a face such that $|F|=k+1$.
\begin{lemma}\label{lem:standard}
Let $R$ be a $d$-representation on $V$ such that every element of $V$ is a vertex of $\Sigma(R)$.
The representation $R$ is standard if and only if there exists vertices $M_1, \ldots M_d$ such that in $\Sigma(R)$ every face belongs to at least one $(d-1)$-face, every $(d-2)$-face belongs to at least two $(d-1)$-faces except the $(d-2)$-faces $F_{i} = \{M_1,\ldots M_d\}\setminus M_i$ which belong to only one $(d-1)$-face.
%In this case, the $M_i$'s are the maxima of the representation $R$.
%\daniel{dans la partie droite du iff, on veut rien supposer sur $R$}
\end{lemma}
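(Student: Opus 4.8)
The statement is an "if and only if" characterizing standard $d$-representations combinatorially in terms of the facet structure of $\Sigma(R)$. I would prove both directions by unpacking the definition of "standard" and the domination rules that define $\Sigma(R)$.

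For the forward direction, assume $R$ is standard. Let $M_i$ be the maximal vertex of the order $\le_i$, for $i\in\II1d$. First I would check that $\{M_1,\dots,M_d\}$ is a facet: for any $v\in V$ and any $i$, $v\le_i M_i$ since $M_i$ is the top of $\le_i$, so $v$ dominates $\{M_1,\dots,M_d\}$ in $\le_i$ (vacuously on the other coordinates does not matter — one order suffices), hence $\{M_1,\dots,M_d\}\in\Sigma(R)$; and it is a $(d-1)$-face. Next, that every face lies in a $(d-1)$-face: given $F\in\Sigma(R)$, I want to extend it. The natural candidate is to "fill in" with the $M_j$'s for the coordinates not yet witnessing domination; the standardness hypothesis — each $M_j$ is among the $d-1$ smallest in every other order — is exactly what guarantees that adding the $M_j$'s does not destroy domination by any vertex $v$, because in each order $\le_i$ the set $F\cup\{M_1,\dots,M_d\}$ still has its $\le_i$-maximum at $M_i$ (or at an element of $F$ that already dominated), and an arbitrary $v$ still dominates it in at least one order. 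I would make this precise by showing $\{M_1,\dots,M_d\}\cup F'\in\Sigma(R)$ for a suitable $(d-1)$-subset, then take $F$ inside it. For the statement about $(d-2)$-faces: a $(d-2)$-face $G$ has $d-1$ vertices; the $(d-1)$-faces containing it are obtained by adding one more vertex $w$ with $G\cup\{w\}\in\Sigma(R)$. I would argue that $G$ is contained in $\{M_1,\dots,M_d\}$ minus one element precisely when the two candidate "fill-in" extensions coincide (both forced to use $M$-vertices on all $d$ coordinates but one is missing), giving a unique $(d-1)$-face, and otherwise there is slack on at least two coordinates, producing at least two distinct extensions.

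For the converse, assume such $M_1,\dots,M_d$ exist with the stated face-counting properties. I need to recover standardness, i.e. that each $M_i$ is the $\le_i$-maximum and that the $\le_i$-maximum sits among the $d-1$ smallest elements of every other order. The key leverage is the exceptional $(d-2)$-faces $F_i=\{M_1,\dots,M_d\}\setminus M_i$ lying in a \emph{unique} $(d-1)$-face (necessarily $\{M_1,\dots,M_d\}$ itself). I would first identify which vertex is the $\le_j$-maximum: the top element of $\le_j$, call it $T_j$, must — because $\{T_1,\dots,T_d\}$-type sets are always faces and because removing a vertex from a facet on which it is the unique witness yields a $(d-2)$-face with a unique extension — coincide with one of the $M_i$; matching up the unique-extension property forces $T_i=M_i$. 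Then, for the "$d-1$ smallest" condition, suppose some $M_i$ were \emph{not} among the $d-1$ smallest in $\le_j$; then there are at least $d-1$ elements below... actually at least $d$ elements, one of which fails to be dominated, and I would use Lemma~\ref{lemma:path_to_nonface} or a direct domination argument to exhibit a set that ought to be a face (contradicting the face structure) or a $(d-2)$-face with too many / too few extensions, contradicting the hypothesis.

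**Main obstacle.** The delicate point is the converse direction: translating a purely combinatorial count on $(d-1)$- and $(d-2)$-faces back into the order-theoretic statement "the maximum of one order is among the $d-1$ smallest of every other order." The forward direction is essentially bookkeeping with the domination rule, but the converse requires showing that \emph{no other} configuration of orders reproduces exactly this facet pattern — in particular ruling out that some $M_i$ is "large" in some order $\le_j$, which would tend to create additional $(d-1)$-faces through the exceptional $F_k$'s or destroy the uniqueness of their extension. I expect the cleanest route is: (i) pin down $M_i=\max_{\le_i}$ via the uniqueness of the extension of $F_i$; (ii) assuming for contradiction $M_i$ is not among the $d-1$ smallest in $\le_j$, locate at least two elements $u,v$ strictly $\le_j$-below $M_i$ and outside $\{M_1,\dots,M_d\}$ that are also below the respective $M_k$'s in every coordinate, and check that $\{M_1,\dots,M_d\}\setminus\{M_j\}\cup\{?\}$ gains an extra $(d-1)$-face or that $F_j$ gains a second one, contradicting the hypothesis. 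Getting the quantifier "$d-1$ smallest" exactly right (as opposed to $d$ or $d-2$) is where the argument must be handled carefully.
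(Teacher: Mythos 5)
There is a genuine gap, and it starts with the domination convention. In this paper $x$ dominates $F$ in $\le_i$ means $F\le_i x$, i.e.\ $f\le_i x$ for every $f\in F$. Your forward direction opens with ``for any $v$ and any $i$, $v\le_i M_i$, so $v$ dominates $\{M_1,\dots,M_d\}$ in $\le_i$'' --- that is the relation read backwards. With the correct reading, a vertex $v$ dominates $\{M_1,\dots,M_d\}$ in $\le_i$ only if $M_i\le_i v$, which forces $v=M_i$; hence $\{M_1,\dots,M_d\}$ is \emph{not} a face of $\Sigma(R)$ as soon as $|V|>d$. This false claim is not incidental: you reuse it as the ``key leverage'' of the converse (``$\{T_1,\dots,T_d\}$-type sets are always faces''), so the step that is supposed to pin down $T_i=M_i$ has no valid starting point. (For the record, the paper dispatches the forward direction by citation to Ossona de Mendez, so the real content of the lemma is the converse.)

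For the converse, your text is a plan rather than a proof, and the two decisive steps are exactly the ones left open. The paper's argument introduces, for each $i$, a $(d-1)$-face $F'_i$ chosen to \emph{minimize} its $\le_i$-maximum $f'_{i,i}$ over all $(d-1)$-faces; minimality shows that $F'_i\setminus\{f'_{i,i}\}$ extends to only one $(d-1)$-face (any other extension $x$ would satisfy $f'_{i,i}<_i x$ and then $f'_{i,i}$ would dominate the new face in no order), and after checking these $d$ exceptional $(d-2)$-faces are pairwise distinct they must coincide with the $F_i$'s. From $M_i\in F_j$ for all $j\neq i$ one then gets that $M_i$ cannot dominate any $(d-1)$-face in any $\le_j$ with $j\neq i$ (that would beat the minimality of $F'_j$), hence it dominates every $(d-1)$-face in $\le_i$ and, since every vertex lies in some $(d-1)$-face, $M_i=\max_{\le_i}$. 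Finally, since each $M_i$ ($i\neq j$) is the $\le_i$-maximum and lies in $F_j$, no vertex outside $F_j$ can dominate $F_j$ in $\le_i$, so all $|V|-(d-1)$ of them dominate $F_j$ in $\le_j$; this places the $d-1$ elements of $F_j$ exactly as the $d-1$ smallest elements of $\le_j$, which is the quantifier you flagged as delicate. Without the minimizing face $F'_i$ (or some substitute), your steps (i) and (ii) do not go through, so the proposal as written does not establish the lemma.
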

\begin{proof}
($\Longrightarrow$) Clear from~\cite{Ossona}.

($\Longleftarrow$) Suppose that there exists vertices $\{M_1, \ldots, M_d\}$ such that every face belongs to at least one $(d-1)$-face, every $(d-2)$-face belongs to at least two $(d-1)$-faces except the $(d-2)$-faces $F_{i} = \{M_1,\ldots M_d\}\setminus M_i$ which belong to only one $(d-1)$-face.

%\dan{Let us first show that the $M_i$'s are the maxima of the representation} and that for every $i$, $M_i$ is among the $d-1$ smallest elements of the order $\le_j$ for every $j \not= i$.

For any $i \in \II1d$ we define $F'_i$ as a $(d-1)$-face minimizing $\max_{\le_i}\{f\in F'_i\}$ in $\le_i$ among the other $(d-1)$-faces. Let us denote $f'_{i,j}$ the element of $F'_i$ that is maximal in $\le_j$. As each of the $d$ elements of $F'_i$ dominates it at least once, $F'_i = \{f'_{i,1},\ldots,f'_{i,d}\}$. Let us show that $F'_i\setminus\{f'_{i,i}\}$, which is a $(d-2)$-face, does not belong to any other $(d-1)$-face. Indeed, if there exists a vertex $x\neq f'_{i,i}$ such that $F'_i\setminus\{f'_{i,i}\}\cup\{x\}$ is a $(d-1)$-face then $F'_i<_i x$ by definition of $F'_i$, and $f'_{i,i}$ cannot dominate $F'_i\setminus\{f'_{i,i}\}\cup\{x\}$ (as $f'_{i,i} <_i x$ and $f'_{i,i}<_j f'_{i,j}$ for any $j\neq i$), a contradiction.

Suppose that there exists $i$ and $j \neq i$ such that $F'_i \setminus \{f'_{i,i} \} = F'_j \setminus \{f'_{j,j}\}$.
We define $X = F'_i \setminus \{f'_{i,i}\}$.
As $X$ is included in $F'_i$ and in $F'_j$ , and as it belongs to only one $(d-1)$-face $F'_i$, we have that $F'_i = F'_j$ and thus that $f'_{i,i} = f'_{j,j}$.
Then in the $(d-1)$-face $F'_i = F'_j$, the vertex $f'_{i,i} = f'_{j,j}$ thus dominates $F'_i$ in two orders, $<_i$ and $<_j$, and one of the remaining $d-1$ elements of $F'_i$ cannot dominate it, a contradiction.
We conclude that the faces $F'_i \setminus \{f'_{i,i}\}$ are distinct and are in bijection with the faces $F_i$.
We can thus assume that $F_{i} = F'_i \setminus \{f'_{i,i}\}$. 

We now show that the $M_i$'s are the maxima of the representation.
Let $i \in \II1d$ and $j \not= i$.
Since $M_i \in F_j$, none of the $(d-1)$-faces is dominated by $M_i$ in the order $<_j$. Indeed if it was the case then we would have a $(d-1)$-face $F''$ such that $F'' \le_j M_i <_j f'_{j,j}$ which contradicts the definition of $F'_j$.
Thus $F'' \le_i M_i$ for every $(d-1)$-face $F''$.
For every element $x$, as $x$ is included in at least one $(d-1)$-face, we have that $x \le_i M_i$.
We conclude that $M_i$ is the maximum of $\le_i$.

We now show that $M_i$ is among the $d-1$ smallest elements of the order $\le_j$ for every $j \not= i$.
Since the face $F_j$ contains $M_i$ for every $i \not= j$, none of the elements in $V\setminus F_j$ dominates $F_j$ in $\le_i$.
Thus each of the $|V|-(d-1)$ elements of $V\setminus F_j$ dominates $F_j$ in the order $\le_j$. $M_i$ is thus among the $d-1$ smallest elements in order $\le_j$.
The representation $R$ is thus standard.
\end{proof}

\begin{theorem}\label{thm:main}
	Let $R$ be the following 4-representation on $V=\{a,b,c,d,e,f,g,h\}$:

	$$
	\begin{array}{c|ccccccccccccc}
	\le_1 & b & c & d & e & g & f & h & a \\
	\le_2 & a & c & d & e & h & f & g & b \\
	\le_3 & a & b & d & f & g & e & h & c \\
	\le_4 & a & b & c & f & h & e & g & d \\
	\end{array}$$

	The simplicial complex $\Delta = \Sigma(R)$ has Dushnik-Miller dimension $4$ but it is not a TD-Delaunay complex of $H_4\simeq \RR^{3}$.
\end{theorem}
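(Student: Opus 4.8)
The plan is to apply Theorem~\ref{thm:delaunay_ineq_syst} in contrapositive form: $\Delta$ fails to be a TD-Delaunay complex of $H_4$ if and only if every $4$-representation $R'$ with $\Sigma(R')=\Delta$ has an unsolvable TD-Delaunay system, which by Proposition~\ref{proposition:alternative} means that every such $R'$ admits a non-zero multi-flow. So, apart from an easy verification that $\dim_{DM}(\Delta)=4$, the whole task is to produce a non-zero multi-flow in \emph{every} representation of $\Delta$, and it splits into understanding what those representations look like and then building the multi-flow.

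First I would compute $\Sigma(R)$ explicitly. The useful simplification is that $a,b,c,d$ are the maxima of $\le_1,\ldots,\le_4$, so every pair of vertices is dominated by each of $a,b,c,d$ in the corresponding order; hence whether a set is a face of $\Sigma(R)$ is governed only by the vertices $e,f,g,h$. From this one reads off the edge set (notably: $e$ and $f$ are adjacent to all other vertices, $N(g)=\{b,d,e,f\}$, $N(h)=\{a,c,e,f\}$, and $g,h$ are non-adjacent), the list of faces, and the fact that $\Delta$ satisfies the hypothesis of Lemma~\ref{lem:standard} with $(M_1,M_2,M_3,M_4)=(a,b,c,d)$, the special $(d-2)$-faces being $\{b,c,d\}$, $\{a,c,d\}$, $\{a,b,d\}$ and $\{a,b,c\}$. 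For the dimension, $\dim_{DM}(\Delta)\le 4$ is immediate from $\Delta=\Sigma(R)$ and Theorem~\ref{theorem:DM_representation}; and $\dim_{DM}(\Delta)\ge 4$ because $\Delta$ contains a $3$-face (for instance $\{b,c,d,e\}$), so there is no straight line embedding of $\Delta$ in $\RR^2$ (its four vertices cannot be affinely independent there), hence $\dim_{DM}(\Delta)>3$ by Theorem~\ref{theorem:SLE}.

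Next I would pin down the representations of $\Delta$. By Lemma~\ref{lem:standard} (and the analysis in the proof of its converse direction) every $4$-representation $R'$ with $\Sigma(R')=\Delta$ is standard, and its four orders have maxima exactly $a$, $b$, $c$, $d$, one in each order; call them the $a$-order, the $b$-order, and so on. Using that in any $3$-face of a representation each of its four vertices is the maximum of that face in exactly one of the four orders, and that the global maximum of an order is automatically the maximum of every $3$-face containing it in that order, I would determine the orientation of each edge incident to $\{e,f,g,h\}$ in each order of $R'$. The target statement is that in every order $\le'_i$ of every such $R'$, one of the two alternatives ($e<'_i g$ and $f<'_i h$) or ($e<'_i h$ and $f<'_i g$) holds; in $R$ itself every order satisfies exactly one of them. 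To see this persists, one inspects the $3$-faces containing $eg$, $eh$, $fg$ and $fh$ (for instance $\{b,d,e,g\}$, $\{b,e,f,g\}$ and $\{d,e,f,g\}$ for the edge $eg$): tracking which vertex is the maximum of which face in which order, one deduces that $e$ is never above both of $g,h$, that $f$ is never above both of $g,h$, and symmetrically that neither $g$ nor $h$ is ever below both of $e,f$.

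Granting this, the multi-flow is short to build. In each order of $R'$ route one unit of flow out of $e$ and one unit out of $f$, sending $e$ and $f$ to distinct vertices of $\{g,h\}$ along single edges; this is possible precisely because at least one of $g,h$ lies above $e$, at least one lies above $f$, and the claim forbids both units being forced to the same target. The four resulting flows (each supported on just two edges) all have the same divergence, namely $-1$ at $e$ and $f$, $+1$ at $g$ and $h$, and $0$ everywhere else, so together they form a non-zero multi-flow of $R'$. By Proposition~\ref{proposition:alternative} the TD-Delaunay system of $R'$ is then unsolvable, and since $R'$ was an arbitrary representation of $\Delta$, Theorem~\ref{thm:delaunay_ineq_syst} gives that $\Delta$ is not a TD-Delaunay complex of $H_4\simeq\RR^3$. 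The hard part will be the third step: showing that the orientation claim holds in every representation of $\Delta$, not just in $R$. This is where standardness (Lemma~\ref{lem:standard}) together with the ``increasing path'' machinery of Lemmas~\ref{lemma:repre_inversion}, \ref{lemma:increasing_path}, \ref{lemma:repre_path} and~\ref{lemma:path_to_nonface} has to be combined with a case analysis over the $3$-faces of $\Delta$, since one must rule out orders that look admissible but in fact cannot occur, e.g. orders in which $g$ sits below both $e$ and $f$.
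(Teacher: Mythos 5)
Your proposal is correct and follows essentially the same route as the paper: reduce via Theorem~\ref{thm:delaunay_ineq_syst} and Proposition~\ref{proposition:alternative} to exhibiting a non-zero multi-flow in \emph{every} representation of $\Delta$, invoke Lemma~\ref{lem:standard} to make $a,b,c,d$ the maxima of the four orders, and route one unit of flow from each of $e,f$ to a distinct vertex of $\{g,h\}$ in each order, giving divergences $-1,-1,+1,+1$ at $e,f,g,h$ and $0$ elsewhere. The step you defer as ``the hard part'' is precisely what the paper's proof carries out, and it succeeds by exactly the method you name --- domination counting in the $3$-faces, which first forces $e$ (resp.\ $f$) to be the fourth-smallest element of the two orders maximized by $a,b$ (resp.\ $c,d$) and then pins down all orientations among $e,f,g,h$ --- with no need for the increasing-path lemmas, which enter only in the proof of Theorem~\ref{thm:delaunay_ineq_syst} itself.
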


\begin{proof}
Let us first show that any $4$-representation $R'$ on $V$
such that $\Sigma(R') = \Delta$ is equivalent to $R$ up to permutations of the orders and up to a permutation of the smallest $3$ elements in each order. 

By Lemma~\ref{lem:standard} such $R'$ is standard with maximal elements $a, b, c$, and $d$. Without loss of generality we assume that these elements are maximal in $\le'_1, \le'_2, \le'_3$, and $\le'_4$, respectively. 
As $\{e,b,c,d\}\in \Delta$ and as none of $a, e, f, g, h$ dominates this face in $\le'_2, \le'_3$, or $\le'_4$, $e$ is necessarily the fourth smallest element in $\le'_1$. We similarly deduce that $e$ is also the fourth smallest element in $\le'_2$ (using the face $\{a,c,d,e\}$ of $\Delta$), and that $f$ is the fourth smallest element in $\le'_3$ and $\le'_4$ (using the faces $\{a,b,d,f\}$ and $\{a,b,c,f\}$ respectively). 
As $f$ dominates $\{b,e,g\}\in \Delta$, we have that $g\le'_1 f$. We similarly deduce that $h\le'_2 f$, $g\le'_3 e$, and $h\le'_4 e$ (using faces $hea$, $gdf$, and $chf$).
As $h$ dominates $\{c,e,f\}\in \Delta$, we have that $f\le'_1 h$. We similarly deduce that $f\le'_2 g$, $e\le'_3 h$, and $e\le'_4 g$ (using faces $def$, $aef$, and $bef$).

This implies that for any such $R'$ the subdigraphs of $G^i(R')$ induced by the vertices $e, f, g,$ and $h$ are the same as in $G^i(R)$.
These digraphs are depicted in Figure~\ref{fig:cex_multi_flow}, as well as 
a multi-flow of $R'$. Vertices $a, b, c,$ and $d$ are not drawn as the flows on their incident edges are null. Their divergences are thus $0$.
The divergences of the vertices $e$ and $f$ is $-1$ and it is $+1$ for $g$ and $f$.
	
	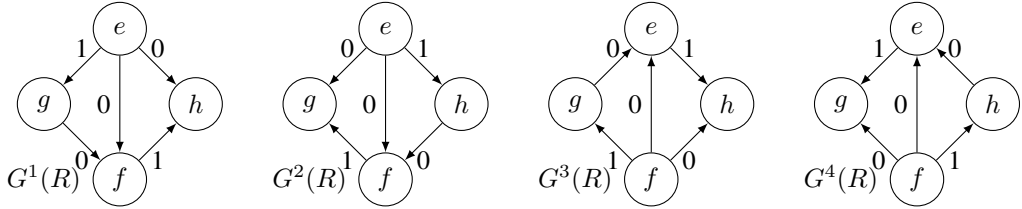
\begin{figure}[h]
		\label{fig:cex_multi_flow}
		\centering
		\begin{tikzpicture}
		\def\x{1	}
		\def\s{3.5}
		
		\begin{tabular}{c}
		 
		\draw node at (-1,-1) {$G^1(R)$};
		\draw node at (2.5,-1) {$G^2(R)$};
		\draw node at (6,-1) {$G^3(R)$};
		\draw node at (9.5,-1) {$G^4(R)$};

		\foreach \i in {0,1,2,3}
		{
		\draw node[draw, circle, minimum size=0.7cm] (a\i) at (\i*\s,\x)  {$e$};
		\draw node[draw, circle, minimum size=0.7cm] (b\i) at (\i*\s,-\x)  {$f$};
		\draw node[draw, circle, minimum size=0.7cm] (c\i) at (\i*\s-\x,0)  {$g$};
		\draw node[draw, circle, minimum size=0.7cm] (d\i) at (\i*\s+\x,0)  {$h$};
		}

		\draw[->,>=latex] (a0) -- (c0) node[midway,above] {1};
		\draw[->,>=latex] (b0) -- (d0) node[midway,below] {1};
		\draw[->,>=latex] (a0) -- (b0) node[midway,left] {0};
		\draw[->,>=latex] (a0) -- (d0) node[midway,above] {0};
		\draw[->,>=latex] (c0) -- (b0) node[midway,below] {0};

		\draw[->,>=latex] (a1) -- (d1) node[midway,above] {1};
		\draw[->,>=latex] (b1) -- (c1) node[midway,below] {1};
		\draw[->,>=latex] (a1) -- (b1) node[midway,left] {0};
		\draw[->,>=latex] (a1) -- (c1) node[midway,above] {0};
		\draw[->,>=latex] (d1) -- (b1) node[midway,below] {0};

		\draw[->,>=latex] (a2) -- (d2) node[midway,above] {1};
		\draw[->,>=latex] (b2) -- (c2) node[midway,below] {1};
		\draw[->,>=latex] (b2) -- (a2) node[midway,left] {0};
		\draw[->,>=latex] (b2) -- (d2) node[midway,below] {0};
		\draw[->,>=latex] (c2) -- (a2) node[midway,above] {0};

		\draw[->,>=latex] (a3) -- (c3) node[midway,above] {1};
		\draw[->,>=latex] (b3) -- (d3) node[midway,below] {1};
		\draw[->,>=latex] (b3) -- (a3) node[midway,left] {0};
		\draw[->,>=latex] (d3) -- (a3) node[midway,above] {0};
		\draw[->,>=latex] (b3) -- (c3) node[midway,below] {0};
		\end{tabular}
		\end{tikzpicture}
		\caption{Definition of the multi-flow on $R'$.}
	\end{figure}
Thus, as every $4$-representation $R'$ such that $\Delta = \Sigma(R')$ has a multi-flow, Proposition~\ref{proposition:alternative} and Theorem~\ref{thm:delaunay_ineq_syst} imply that $\Delta$ is not a TD-Delaunay complex of $H_4\simeq \RR^3$.
\end{proof}

\section{Link with Rectangular Delaunay complexes}

Rectangular Delaunay graphs have been studied independently by Felsner~\cite{FarXiv06} and by Chen \textit{et al.}~\cite{CPST}.
In the following, $\PP$ denotes a finite set of points of $\RR^2$ such that no two points share the same vertical or horizontal coordinate.

\begin{definition}
    We define the {\it R-Delaunay complex} $\RD(\PP)$ of the point set $\PP$ as the simplicial complex whose vertex set is $\PP$ such that a subset $F$ of $\PP$ forms a face if there exists an axis-parallel rectangle $R$ such that $R \cap \PP = F$ and such that $R$ does not contain any point of $\PP$ in its interior. The {\it R-Delaunay graph} of a point set $\PP$ is the graph defined by the faces of size one and two in $\RD(\PP)$.
\end{definition}

\begin{figure}[!h]
    \centering
    \includegraphics[scale=1]{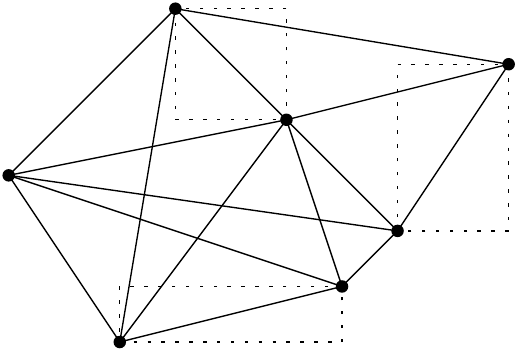}
    \caption{An example of a R-Delaunay graph where three rectangles are drawn in dashed segments.}
    \label{fig:my_label}
\end{figure}

R-Delaunay graphs have interesting properties. Felsner~\cite{FarXiv06} showed that those graphs can have a quadratic number of edges. In order to solve a question motivated by a frequency assignment problem in cellular telephone networks and related to conflit-free colorings~\cite{ELRS}, Chen {\it et al.} showed the following result.

\begin{theorem}[Chen {\it et al.}~\cite{CPST}]
    Given points $\PP$ in the unit square selected randomly and uniformly,
    the probability that the largest independent set of $RD(\PP)$ is $O(n \log^2 \log(n)/\log(n)$) tends to $1$.
\end{theorem}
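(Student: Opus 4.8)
\medskip
\noindent\textbf{Proof plan.}
The statement is the theorem of Chen, Pach, Szegedy and Tardos~\cite{CPST}, and the plan is to reprove it along their lines. The first step is to remove the geometry: $\RD(\PP)$ depends only on the relative order of the $x$- and $y$-coordinates of $\PP$, so one may assume $\PP=\{(i,\pi(i)):i\in\II1n\}$ for a uniformly random permutation $\pi$ of $\II1n$, and it suffices to establish the stated bound on $\alpha(\RD(\PP))$ in this model. The second step is the combinatorial characterization of independent sets, which follows from the definition together with general position: a set $S$ of $x$-coordinates is independent in $\RD(\PP)$ if and only if every pair $i<j$ in $S$ admits a \emph{witness}, i.e. an index $k$ with $i<k<j$ and $\pi(k)$ strictly between $\pi(i)$ and $\pi(j)$. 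Two immediate consequences: $S$ contains no two consecutive integers, and, more quantitatively, the axis-parallel box that two elements of $S$ span in rank space has to be large, since a narrow box is unlikely to contain any point of $\PP$; hence a large independent set must be well spread in \emph{both} coordinates.

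With this in place, I would run a first-moment argument. Fix $m:=Cn\log^2\log n/\log n$ with $C$ a large constant; the aim is to show that the expected number of independent sets of size $m$ tends to $0$, which gives $\alpha(\RD(\PP))<m$ with probability tending to $1$, as required. The naive version fails: bounding this expectation by $\binom nm$ times the probability that a \emph{fixed} $m$-set is independent, and estimating the latter pair by pair, yields for an $m$-set with evenly spaced $x$-coordinates only a probability of order $e^{-\Theta(m^2/n)}$, which does not beat $\binom nm$ for any $m$ up to $n/2$. So one has to cut down the family of candidate sets: using the ``well spread'' constraint one keeps only independent sets that are ``dyadically canonical'' with respect to a balanced recursive halving of rank space of depth $\log_2 n$ (at most one element per cell, up to a bounded-factor loss), and one estimates, level by level of this recursion, the probability that all the witnesses forced by the pairs of $S$ exist, the witnesses at a given level being confined to cells determined by the decomposition. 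Balancing the reduced number of candidates against this multi-scale probability is exactly what forces $m=\Theta(n\log^2\log n/\log n)$, and explains the two iterated logarithms.

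The hard part is this multi-scale probabilistic estimate. Because a single point of $\PP$ can witness many pairs, and because the positions and values of the elements of $S$ are themselves random, the $\binom m2$ witness events are strongly correlated and cannot be multiplied out per pair. The device I would use is a description-length bound layered on the dyadic decomposition: conditioned on $\pi$, a canonical independent set of size $m$ can be recovered from an advice string of length $o\!\left(\log\binom nm\right)$, roughly by scanning $S$ in $x$-order and using, for each new element, the location of the witness of the pair it forms with the previous element to confine that element to a short range, so that each new element costs only $\bigO{\log\log n}$ bits instead of the $\log(n/m)$ bits a fresh choice would cost; feeding this back into the count closes the gap. Carrying this out rigorously — controlling the rare events in which a witness fails to localize an element, quantifying the loss incurred in passing to canonical sets, and pinning down the constants hidden inside $\log^2\log n$ — is where essentially all the difficulty lies; the reduction to random permutations and the characterization of independent sets are routine by comparison.
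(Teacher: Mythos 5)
First, a point of order: the paper does not prove this statement at all --- it is quoted from Chen \emph{et al.}~\cite{CPST} as background for the section on rectangular Delaunay complexes, so there is no in-paper proof to compare your attempt against; the only meaningful comparison is with the argument in~\cite{CPST} itself.

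Judged as a proof, your proposal has a genuine gap: every step that carries the quantitative content of the theorem is deferred rather than carried out. The reduction to a uniformly random permutation $\pi$ and the witness characterization of independent sets (a pair is a non-edge of $\RD(\PP)$ exactly when the open rectangle it spans contains a point of $\PP$) are indeed routine, and you correctly diagnose why the naive first-moment computation fails. But the two steps that would close the argument are only asserted. (i) You claim one may restrict to ``dyadically canonical'' independent sets at a bounded-factor loss; pruning an arbitrary independent set of size $m$ to one that is balanced across a recursive halving of rank space can a priori cost far more than a constant factor, and no pruning lemma is stated or proved. (ii) You claim that, conditioned on $\pi$, a canonical independent set can be encoded with $O(\log\log n)$ bits per element because the witness of the pair $\{s_{t-1},s_t\}$ ``localizes'' $s_t$ to a short range. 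Nothing in the witness condition forces the spanned rectangle --- hence the next element --- to lie in a range of polylogarithmic size; a witness may sit anywhere inside an arbitrarily wide box. Making (ii) precise is exactly the delicate multi-scale estimate you acknowledge omitting, and it is also the only place the exponent $2$ in $\log^{2}\log n$ could emerge, so it cannot be black-boxed: without it the first-moment computation cannot be completed and the stated bound $O(n\log^{2}\log n/\log n)$ is not derived. As it stands the proposal is a plausible strategy outline, not a proof.
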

This implies that R-Delaunay graphs
%, and thus TD-Delaunay graphs of $H_4\simeq \RR^3$, 
can have arbitrarily large chromatic number.

Given a point set $\PP$, we define the horizontal order $\le_1$ on $\PP$ as follows: for every two points $x=(x_1,x_2)$ and $y=(y_1,y_2) \in \PP$, $x \le_1 y$ if and only if $x_1 \le y_1$.
Then we define $\le_2$ as the reverse of the horizontal order.
We define the vertical order $\le_3$ on $\PP$ as previously: here $x \le_3 y$ if and only if $x_2 \le y_2$; and we define $\le_4$ as the reverse of the vertical order.

One can show that using these four orders one obtains a representation $R$ such that $\RD(\PP) = \Sigma(R)$ and actually any 4-representation,  where two pairs of orders are the reverse from each other, defines   an R-Delaunay complex. Felsner~\cite{FarXiv06} showed this for graphs but his proof easily extends to simplicial complexes. In his words, R-Delaunay complexes are exactly the complexes of dimension $[3\updownarrow\updownarrow4]$. This implies that R-Delaunay complexes form a subclass of the simplicial complexes with Dushnik-Miller dimension at most 4. The following theorem refines this by showing that it is also a subclass of TD-Delaunay complexes of $H_4\simeq \RR^3$.
    
\begin{theorem}\label{thm:dim34}
    The class of R-Delaunay complexes is included in the class $\TDD_4$ of TD-Delaunay complexes of $H_4\simeq \RR^3$.
\end{theorem}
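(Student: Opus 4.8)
The plan is to use the characterization provided by Theorem~\ref{thm:delaunay_ineq_syst}: given an R-Delaunay complex $\RD(\PP)$, we need to produce a $4$-representation $R$ with $\RD(\PP) = \Sigma(R)$ whose TD-Delaunay system admits a solution. We already have a natural candidate representation $R = (\le_1,\le_2,\le_3,\le_4)$ coming from the point set $\PP \subset \RR^2$, where $\le_1$ is the horizontal order, $\le_2$ its reverse, $\le_3$ the vertical order, and $\le_4$ its reverse; by the remark attributed to Felsner~\cite{FarXiv06}, $\RD(\PP) = \Sigma(R)$. So the whole content of the theorem is to exhibit an explicit solution to the TD-Delaunay system of this $R$, i.e.\ to place each vertex $v\in\PP$ at a point $v = (v_1,v_2,v_3) \in \RR^3$ (with implicit fourth coordinate $v_4 = 1 - v_1 - v_2 - v_3$) so that for every edge $\{x,y\}$ of $\Sigma(R)$ and every $i\in\II14$, the $i$-th coordinate of the $\le_i$-larger endpoint strictly exceeds that of the smaller one.

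The key step is to guess the right coordinates in terms of the original planar coordinates. Write $p = (p_1,p_2)$ for the planar position of a vertex. Since $\le_1$ and $\le_2$ are governed by $p_1$ and $-p_1$, and $\le_3,\le_4$ by $p_2$ and $-p_2$, a first instinct is to set (something like) $v_1 = \alpha p_1$, $v_3 = \beta p_2$, with the reversed orders handled through $v_2$ and $v_4 = 1 - v_1 - v_2 - v_3$. Concretely I would try $v_1 := \lambda p_1$, $v_3 := \lambda p_2$ for a small scaling $\lambda>0$, and choose $v_2$ so that the $\le_2$-constraint $x_2 < y_2$ (which must hold whenever $y \le_2 x$, i.e.\ whenever $y_1 < x_1$) is satisfied; the cleanest choice is $v_2 := -\lambda p_1 + \mu$ for a constant shift (the constant is free since only differences matter), giving $x <_2 y \iff y_1 < x_1$ as required. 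Then $v_4 = 1 - v_1 - v_2 - v_3 = 1 - \lambda p_1 - (-\lambda p_1 + \mu) - \lambda p_2 = 1 - \mu - \lambda p_2$, so the $\le_4$-constraint automatically becomes $x_4 < y_4 \iff y_2 < x_2$, which is exactly what the reversed vertical order $\le_4$ demands. Thus all four families of inequalities reduce to the two obvious facts $p_1 < p_1'$ and $p_2 < p_2'$ on the relevant edges, which hold by the very definition of the orders — for $\lambda$ small enough all the strict inequalities are preserved and, after a tiny perturbation, the points are in general position in $H_4$. One then invokes Theorem~\ref{thm:delaunay_ineq_syst} to conclude $\RD(\PP)$ is a TD-Delaunay complex of $H_4 \simeq \RR^3$, and since $\PP$ was arbitrary, the class inclusion follows.

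The main obstacle is verifying that this is really an \emph{if and only if} at the level needed: Theorem~\ref{thm:delaunay_ineq_syst} only requires that \emph{some} representation $R$ with $\Sigma(R)=\RD(\PP)$ have a solvable system, and the representation above is the obvious one, so this is not a genuine difficulty — the potential subtlety is instead bookkeeping: one must check that the constraints imposed by the TD-Delaunay system are \emph{only} Inequality~(\ref{eq:xyi}) for pairs $\{x,y\}\in\Sigma(R)$ (edges), not for all pairs, and that the substitution of the fourth coordinate via Equation~(\ref{eq:Hd}) is exactly the one performed above. A secondary point to be careful about is the general-position requirement: the linear map $p \mapsto (v_1,v_2,v_3)$ above is injective (it recovers $p_1$ from $v_1$ and $p_2$ from $v_3$) and the coordinate-equality degeneracies are ruled out since $\PP$ already has distinct horizontal and vertical coordinates, but one should still note that an arbitrarily small perturbation keeps all strict inequalities while breaking any remaining coincidences among coordinates $v_i$ of distinct vertices. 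With these verifications in place the proof is short.
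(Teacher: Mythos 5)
Your proof is correct, but it takes a genuinely different route from the paper. You attack the TD-Delaunay system on the primal side, writing down an explicit solution: with $v_1=\lambda p_1$, $v_2=-\lambda p_1+\mu$, $v_3=\lambda p_2$ and hence $v_4=1-\mu-\lambda p_2$, each of the four families of edge inequalities collapses to the defining inequality of the corresponding order, so the system is satisfied for any $\lambda>0$ (the smallness of $\lambda$ is not even needed), and general position already holds because $\PP$ has distinct horizontal and vertical coordinates. The paper instead works on the dual side: it invokes Proposition~\ref{proposition:alternative} (the Farkas-type alternative) and shows that $R$ admits no non-zero multi-flow, by observing that the extremal point in $\le_1$ is a sink of $G^1(R)$ and a source of $G^2(R)$, forcing its common divergence to vanish, and then peeling off vertices by induction. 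Your construction is arguably more elementary and self-contained --- it is essentially the ``explicit calculus'' proof that the paper alludes to after the theorem (realizing axis-parallel rectangles as plane sections of regular tetrahedra), recast in coordinates where the computation is trivial. What the paper's dual argument buys is the more general remark stated right after the proof: \emph{any} $d$-representation in which some pair of orders are reverses of each other has a solvable TD-Delaunay system, a statement that does not come with planar coordinates $p$ to plug into your formulas. For the theorem as stated, both arguments are complete; just note that the perturbation step you mention is already built into the $(\Leftarrow)$ direction of Theorem~\ref{thm:delaunay_ineq_syst}, so you need not redo it.
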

\begin{proof}
    We want to show that $RD(P)$ is a TD-Delaunay complex.
    According to Theorem~\ref{thm:delaunay_ineq_syst}, it is enough to show that the TD-Delaunay system of inequalities defined by $R$, where $R$ is the previously defined 4-representation, has a solution. By Proposition~\ref{proposition:alternative}, this is equivalent to show that there is no non-zero multi-flow  $\varphi = (\varphi_1, \varphi_2, \varphi_3, \varphi_4)$ on $R$.

    Consider $x$ the rightmost point: it is the greatest point according to order $\le_1$.
    So $x$ is a sink in $G^1(R)$ and $\DD_{\varphi_1}(x) \ge 0$ for any flow $\varphi_1$ on $G^1(R)$.
    As $\le_2$ is order $\le_1$ reversed, $x$ is also the minimum in order $\le_2$.
    So $x$ is a source in $G^2(R)$ and $\DD_{\varphi_2}(x) \le 0$ for any flow $\varphi_2$ on $G^2(R)$.
    So if $\DD_{\varphi_1}(x) = \DD_{\varphi_2}(x)$ we have that this divergence is null, as well as $\varphi_1$ and $\varphi_2$ on the arcs incident to $x$.
    By induction on the number of vertices, we show that all the divergences defined by $\varphi_1$ (and $\varphi_2$) are null. Note that a flow $\varphi$ defined on an acyclic digraph (such as $G^1(R), G^2(R), G^3(R)$, and $G^4(R)$) with divergence zero at every vertex, is null everywhere (i.e. for each arc $a$, $\varphi(a)=0$). Therefore $R$ has no non-zero multi-flow, and the system of inequalities defined by $R$ admits a solution and $RD(\PP)=\Sigma(R)$. The simplicial complex $RD(\PP)$ is thus a TD-Delaunay complex of $H_4\simeq \RR^3$.
\end{proof}

Note that these classes are distinct as $K_5\in \TDD_4$ is not an R-Delaunay graph.
Another way to prove the above theorem is by showing that in $\mathbb{R}^3$, any rectangle with sides parallel to the axis and drawn on the plane defined by $z=0$, is the intersection between this plane and a regular simplex homothetic to the simplex with vertices $(-2,0,+\sqrt{2}), (0,-2,-\sqrt{2}), (+2,0,+\sqrt{2}), (0,+2,-\sqrt{2})$. Such a proof would also be simple but it would involve a few calculus that were avoided in the proof above.

Note that the arguments in the proof of Theorem~\ref{thm:dim34} show that any $d$-representation $R$, where two orders are the reverse from each other, is such that its TD-Delaunay system has a non-zero solution and thus $\Sigma(R)$ is a TD-Delaunay complex of $H_d\simeq \RR^{d-1}$.

%%  Figure~\ref{fig:octahedron}

%% \begin{figure}%[h!]
%% \begin{center}
%% \includegraphics[scale=0.4]{octahedron.pdf}
%% \end{center}
%% \caption{kjgkgjk.}
%% \label{fig:octahedron}
%% \end{figure}

\section{Conclusion}
\label{sec:ccl}

TD-Delaunay complexes of $\RR^3$ have shown their interest as sparse spanners with small stretch~\cite{BGHI}. 
It is likely that this extends to higher dimensional spaces. 
A question is for example whether graphs of Dushnik-Miller dimension 4 give better spanners than TD-Delaunay complexes in $\RR^3$.

Furthermore, as the class of TD-Delaunay complexes of $H_d\simeq \RR^{d-1}$ is strictly included in the class of complexes of Dushnik-Miller dimension $d$, it may be interesting to study the problem of the grid embedding for this restricted class.  
Indeed Schnyder showed~\cite{Schnyder} that a planar graph with $n$ vertices can be embedded in an $n\times n$ grid (without crossings). Ossona de Mendez showed~\cite{Ossona} that a complex of Dushnik-Miller dimension $d$ could be drawn in $\RR^{d-1}$ without crossings. 
Nevertheless the last embedding uses exponentially large grids and it is still an open problem to reduce the sizes of these grids.
Is it possible to reduce the size of these grids when dealing with TD-Delaunay complexes?

%% Let $R$ be a $d$-representation on $V$.  Let $F$ be a face of
%% $\Sigma(R)$ and $x \in F$.  The angle labeling of $x$ in $F$ is the
%% set $\{ i \in \{1 , \ldots , d\} : F \le_i x \}$.  It corresponds to
%% the orders where $x$ dominates $F$.  The set of all the angle
%% labelings is called the labeling of $R$.  For an abstract simplicial
%% complex $\Delta$ of Dushnik-Miller dimension $d$ we could ask if there
%% exists a unique labeling of $\Delta$.  It is not clear as there could
%% exist different representations for $\Delta$.  It has been shown that
%% a $3$-tree graph admits a unique labeling. \lucas{where ?}

%% ouverture sur labellings

\section*{Acknowledgments}
The authors are thankful to Arnaud Mary and Vincent Pilaud  for interesting discussions on this topic and to Gwenaël Joret for pointing us the class of rectangular Delaunay complexes.

\end{document}